\DeclareMathOperator{\KS}{C}
\DeclareMathOperator{\CAM}{CAM}
\DeclareMathOperator{\CD}{CD}
\DeclareMathOperator{\poly}{poly}
\DeclareMathOperator{\LDC}{LDC}
\DeclareMathOperator{\TR}{TR}
\DeclareMathOperator{\Pclass}{P}
\DeclareMathOperator{\NP}{NP}
\DeclareMathOperator{\prob}{Prob}
\DeclareMathOperator{\pr}{Pr}
\newcommand{\eps}{\varepsilon}
\def\mmid{\mathchoice{\hskip1.5pt|\hskip1.5pt}{\hskip1.5pt|\hskip1.5pt}{\hskip0.5pt|\hskip0.5pt}{\hskip0.3pt|\hskip0.3pt}}
\newtheorem{theorem}{Theorem}
\newtheorem{definition}{Defintion}
\newtheorem{lemma}{Lemma}
\begin{document}

\title{Variations on Muchnik's\\ Conditional Complexity Theorem%
        \thanks{Supported by ANR Sycomore, NAFIT ANR-08-EMER-008-01 and
        RFBR~09-01-00709-a grants.}%
}
\author[1]{Daniil Musatov}
\author[2,3]{Andrei Romashchenko}
\author[2,3]{Alexander Shen}

\affil[1]{Lomonosov Moscow State University}
\affil[2]{LIF de Marseille, CNRS \& Univ. Aix--Marseille}
\affil[3]{On leave from the  Institute for Information Transmission Problems  of RAS, Moscow.}

\maketitle

\begin{abstract}

\sloppy

Muchnik's theorem about simple conditional descriptions states
that for all strings $a$ and $b$ there exists a program $p$
transforming $a$ to $b$ that has the least possible length and
is simple conditional on~$b$. In this paper we present two new
proofs of this theorem. The first one is based on the on-line
matching algorithm for bipartite graphs. The second one, based
on extractors, can be generalized to prove a version of
Muchnik's theorem for space-bounded Kolmogorov complexity.
Another version of Muchnik's theorem is proven for a resource-bounded
variant of Kolmogorov complexity based on Arthur--Merlin protocols.

\end{abstract}

\section{Muchnik's Theorem}

In this section we recall a result about conditional Kolmogorov
complexity due to An.~Muchnik~\cite{muchnik-codes}. By $\KS(u)$
we denote Kolmogorov complexity of string $u$, i.e., the length of a
shortest program generating $u$. The conditional complexity of $u$
given $v$, the length of a shortest program that translates $v$
to $u$, is denoted by $\KS(u\mmid v)$, see \cite{li-vitanyi}.

\begin{theorem}\label{main-theorem}
Let $a$ and $b$ be two binary strings, $\KS(a)<n$ and $\KS(a\mmid b)<k$.
Then there exists a string~$p$ such that

\textbullet\ $\KS(a\mmid p,b) \le O(\log n)$;

\textbullet\ $\KS(p) \le  k+O(\log n)$;

\textbullet\ $\KS(p\mmid a) \le  O(\log n)$.
\end{theorem}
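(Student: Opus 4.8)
The plan is to reduce the theorem to a purely combinatorial statement — the existence of a certain bipartite graph admitting an \emph{online} matching — and then to prove that statement by the probabilistic method.

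First I would fix the graph. Since $\KS(a|b)\le\KS(a)+O(1)$ we may assume $k\le n+O(1)$, so $n$ and $k$ together cost only $O(\log n)$ bits. Let $L$ be the effectively enumerable set of all strings of complexity less than $n$, so $|L|\le 2^n$. I want a bipartite graph $G=G_{n,k}$ with left part $L$, right part $R$ of size $2^k\cdot\poly(n)$ (so a right vertex is named by $k+O(\log n)$ bits), left-degree $\poly(n)$ (so an edge out of a fixed left vertex is named by $O(\log n)$ bits), and the following \textbf{online matching property}: some deterministic algorithm, receiving distinct left vertices one at a time, irrevocably matches each incoming vertex to a previously unused right vertex and never fails as long as at most $2^k$ vertices are presented. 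The graph $G_{n,k}$ and this algorithm will be computable from $n$ and $k$ alone.

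Granting this, here is the construction of $p$. Let $B=\{a':\KS(a'|b)<k\}$; it is effectively enumerable, $|B|<2^k$, and $a\in B\cap L$. Fix a canonical (dovetailed) enumeration of $B\cap L$. The encoder, knowing $a,b,n,k$, runs the online matching algorithm along this enumeration; since $|B\cap L|<2^k$ it never fails, so $a$ is eventually presented and matched to some $p\in R$, and we take that $p$. Then $\KS(p)\le k+O(\log n)$ because $p\in R$; and $\KS(p\mid a)\le O(\log n)$ because $(a,p)$ is an edge of $G_{n,k}$ — to recover $p$ from $a$ it suffices to give $n$, $k$, and the index of $p$ among the $\poly(n)$ neighbours of $a$. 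Finally, for $\KS(a\mid p,b)$: from $p,b,n,k$ rerun the same algorithm on the same canonical enumeration of $B\cap L$ and halt at the first moment $p$ becomes matched, outputting the left vertex matched to it. Since the matching is irrevocable and $p$ is matched to $a$ at the end, $p$ stays unmatched until the step where $a$ is presented, and is matched to $a$ from that step on; hence the procedure halts and outputs exactly $a$, giving $\KS(a\mid p,b)\le O(\log n)$.

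The remaining task — and the main obstacle — is the combinatorial lemma. I would take $G$ at random (each left vertex independently choosing $\poly(n)$ right vertices) and show that with positive probability it has an expansion property strong enough to make the online algorithm succeed; then $G_{n,k}$ is the first graph with that property in a fixed order, computable by exhaustive search, and the algorithm is determined by it. The subtle point is that Hall's condition for all left subsets of size up to $2^k$ is \emph{not} enough: against an adaptive adversary an irrevocable online matching can be forced to fail even when such a condition holds (small graphs already show this). One therefore needs a quantitatively stronger expansion guarantee — enough slack that a safe choice of partner always survives — together with a suitably non-greedy online strategy, and must verify that a random sparse graph supplies it and that the strategy provably never gets stuck. (A different route, taken in the paper's second proof and better suited to the space-bounded and Arthur--Merlin versions, replaces the online matching by an extractor; here I would expect the matching argument to be the more elementary of the two.)
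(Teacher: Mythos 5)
Your reduction of Muchnik's theorem to the existence of a bipartite graph with the on-line matching property is correct and coincides with the paper's first proof: fix such a graph $G_{n,k}$ (computable from $n,k$ by exhaustive search), run the on-line matching algorithm along a canonical enumeration of $\{x:\KS(x|b)<k\}$, and take $p$ to be the partner of $a$; the three bounds then follow exactly as you argue. One small simplification the paper makes: instead of taking $L$ to be the non-computable set of strings of complexity $<n$, it first replaces $a$ by its shortest program (costing $O(\log n)$) so that $L$ may be the fixed set of all strings of length $<n$; this avoids having to build and search over graphs whose left part is only enumerable.

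The genuine gap is that the combinatorial lemma --- the existence of a sparse graph with small right part admitting an on-line matching up to size $2^k$ --- is left unproven, and the direction you gesture at is not the one that makes it go through. You rightly observe that the off-line Hall condition for subsets of size up to $2^k$ is not sufficient on-line, and you propose to look for ``a quantitatively stronger expansion guarantee --- enough slack that a safe choice of partner always survives --- together with a suitably non-greedy online strategy.'' The paper does something different and simpler: it does \emph{not} seek a single graph and strategy that never reject. It (i) shows by the probabilistic method (a geometric-series bound with $c=2$) that a random graph with left degree $n^c$ and $|R|=2^k n^c$ satisfies the \emph{off-line} Hall condition up to size $2^k$; (ii) runs the plain greedy strategy on this graph and observes that Hall's condition forces at most half of the requests to be rejected --- if more than $2^{k-1}$ were rejected, fewer than $2^{k-1}$ right vertices would be in use, yet the rejected set $X$ has $\#N(X)\ge\#X>2^{k-1}$ neighbors all in use, a contradiction; and (iii) forwards the rejected requests to a fresh disjoint copy of $R$ with the same graph, so that after $k+1$ layers nothing is left. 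The union of the $k+1$ layers still has $|R|=2^k\cdot\poly(n)$ and left degree $\poly(n)$. Without this (or some comparable) mechanism for absorbing greedy's failures, your sketch does not establish the combinatorial statement, so the proof is incomplete at exactly the point you flag as the main obstacle.
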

This is true for all $a,b,n,k$, and the constants hidden in $O(\log n)$ do not
depend on them. 

\textbf{Remarks}.
1. In the second inequality we can replace complexity $\KS(p)$ of a string $p$ by its length $|p|$. Indeed, we can use the shortest description of $p$ instead of $p$. 

2. We may let $k=\KS(a\mmid b)+1$ and replace $k+O(\log n)$ by $\KS(a\mmid b)+O(\log n)$ in the second inequality. We may also let $n=\KS(a)+1$.

3. Finally, having $|p|\le \KS(a\mmid b)+O(\log n)$, we can delete $O(\log n)$ last bits in $p$, and the first and third inequalities will remain true. We come to the following reformulation of Muchnik's theorem: \emph{for every two binary strings $a$ and $b$ there exist a binary string $p$ of length at most $\KS(a\mmid b)$ such that $\KS(a\mmid p,b)\le O(\log \KS(a))$ and $\KS(p\mmid a)\le O(\log\KS(a))$}.

Informally,  Muchnik's theorem says that there exists a program $p$
that transforms $b$ to $a$, has the minimal possible complexity
$\KS(a\mmid b)$ up to a logarithmic term, and, moreover, can be easily
obtained from $a$. The last requirement is crucial, otherwise
the statement becomes a trivial reformulation of the definition
of conditional Kolmogorov complexity.

This theorem is an algorithmic counterpart of Slepian--Wolf
theorem~\cite{slepian-wolf}
in multisource information theory. Assume that some person
\textbf{S} knows $b$ and wants to know $a$. We know $a$ and
want to send some message $p$ to \textbf{S} that will allow
\textbf{S} to reconstruct $a$. How long should be this message?
Do we need to know $b$ to be able to find such a message?
Muchnik's theorem provides kind of a negative answer to the last
question, though we still need a logarithmic advice.
Indeed, the absolute minimum for a
complexity of a piece of information $p$ that together with $b$
allows \textbf{S} to reconstruct $a$, is $\KS(a\mmid b)$. It is easy to
see that this minimum can be achieved with logarithmic
precision by a string $p$ that has logarithmic complexity
conditional on $a$ and $b$. But it turns out that in fact $b$ is
not needed and we can provide $p$ that is simple conditional on
$a$ and still does the job.

In many cases statements about Kolmogorov complexity have
combinatorial counterparts, and sometimes it is easy to show the
equivalence between complexity and combinatorial statements.
In the present paper we study two different combinatorial
objects closely related to Muchnik's theorem and its proof.

First, in Sect.~\ref{online}, we define the \emph{on-line
matching} problem for bipartite graphs. We formulate some
combinatorial statement about on-line matchings. This statement:
(1)~easily implies Muchnik's theorem and (2)~can be proven using
the same ideas that were used by Muchnik in
his original proof, with some adjustments.

Second, in Sect.~\ref{ext-muchnik-proof}, following
\cite{fortnow}, we use extractors and their combinatorial
properties. Based on this technique, we give a new proof of
Muchnik's theorem. With this method we prove versions of this
theorem for polynomial space Kolmogorov complexity and also for
some very special version of polynomial time Kolmogorov
complexity.

This work was presented on the  CSR2009 conference in Novosibirsk,
Russia on 18--23 August, 2009, and the conference version of the
paper was published in CSR2009 Proceedings by Springer-Verlag.
This version of the paper is slightly rearranged and extended.

\section{Muchnik's Theorem and On-line Matchings}
\label{online}

In this section we introduce a combinatorial problem that we call \emph{on-line matching}. It can be considered as an on-line version of the classical matching problem. Then we formulate some combinatorial statement about on-line matchings and explain how it implies Muchnik's theorem. Finally, we provide a proof of this combinatorial statement, starting with the off-line version of it. This finishes the proof of Muchnik's theorem.

\subsection{On-line Matchings}\label{subseq:online}

Consider a bipartite graph with the left part $L$, the right
part $R$ and a set of edges $E\subset L\times R$. Let $s$ be
some integer. We are interested in the following property of the
graph:

\begin{quote}
\emph{for any subset $L'$ of $L$ of size at most $s$ there
exists a subset $E'\subset E$ that performs a bijection between
$L'$ and some $R'\subset R$}.
\end{quote}

A necessary and sufficient condition for this property is
provided by  well-known Hall's theorem. It says that \emph{for each set
$L'\subset L$ of size $t\le s$ the set of all neighbors of
elements of $L'$ contains at least $t$ elements.}

\begin{wrapfigure}[5]{r}{30mm}
\vbox{\vspace*{-2mm}\includegraphics[scale=0.7]{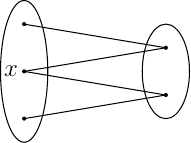}\vss}
\end{wrapfigure}
This condition is not sufficient for the following on-line
version of matching. We assume that an adversary gives us
elements of $L$ one by one, up to $s$ elements.
At each step we should provide a counterpart for each given
element $x$, i.e., to choose some neighbor $y\in R$ not used
before. This choice is final and cannot be changed later.

Providing a matching on-line, when next steps of the adversary
are not known in advance, is a more subtle problem than the
usual off-line matching. Now Hall's criterion, while still being 
necessary, is no more sufficient. For example, for the graph shown in the
picture, one can find a matching for each subset of size at most
2 of the left part, but this cannot be done on-line. Indeed, we are
blocked if the adversary starts with $x$.

Now we formulate a combinatorial statement about on-line
matching; then in Sect.~\ref{proof-muchnik} we show that this property implies Muchnik's
theorem, and in Sect.~\ref{olm-exist}
we prove this property.

\textbf{Combinatorial statement about on-line matchings}
(\textbf{OM}).
\emph{There exists a constant $c$ such that for every integers $n$
and $k$, where $k\le n$, there exists a bipartite graph $G$
whose left part $L$ has size $2^n$, right part $R$ has size $2^k
n^c$, each vertex in $L$ has at most $n^c$ neighbors in $R$, and for which
on-line matching is possible up to size $2^k$.}

Note that the size of the on-line matching is close to the size of $R$
up to a polynomial factor, and the degrees of all $L$-elements
are polynomially bounded, so we are close to Hall's bound.

\subsection{Proof of Muchnik's theorem}
        \label{proof-muchnik}

First we show how (OM) implies
Muchnik's theorem. We may assume without loss of generality that
the \emph{length} of the string $a$ (instead of its
complexity) is less than $n$. Indeed, if we replace $a$
by a shortest program that generates $a$, all complexities
involving $a$ change by only $O(\log n)$ term: knowing the
shortest program for $a$, we can get $a$ without any additional
information, and to get a shortest program for $a$ given $a$
we need only to know the value of $\KS(a)$, because we can try all programs of length $\KS(a)$
until one of them produces $a$. There may exist several
different shortest programs for~$a$; we take that one
which appears first when trying in parallel all programs of
length $\KS(a)$. As we have said, for similar reasons it does not matter whether
we speak about $\KS(p)$ or $|p|$ in the conclusion of the theorem.
We used $\KS(p)$ to make the statement more uniform; however, in the proof
we get the bound for $|p|$ directly.

We may assume that $n\ge k$, otherwise the statement of theorem~\ref{main-theorem} is trivial (let $p=a$).
Consider the graph $G$ provided by (OM) with parameters $n$ and
$k$. Its left part $L$ is interpreted as the set of all strings
of length less than $n$; therefore, $a$ is an element
of $L$. Knowing $b$, we can enumerate all strings $x$ of length less than $n$
such that $\KS(x\mmid b)<k$. There exist at most $2^k$ such strings,
and $a$ is one of them. The property (OM) implies that it
is possible to find an on-line matching for all these strings,
in the order they appear during the enumeration. Let $p$ be an
element of $R$ that corresponds to $a$ in this matching.

Let us check that $p$ satisfies all the conditions of Muchnik's
theorem. First of all, note that the graph $G$ can be chosen in
such a way that its complexity is $O(\log n)$. Indeed,
(OM) guarantees that a graph with the required properties
exists. Given $n$ and $k$, we can perform an exhaustive search
until the first graph with these properties is found. This graph
is a computable function of $n$ and $k$, so its complexity does
not exceed the complexity of the pair $(n,k)$, which is $O(\log n)$.

If $a$ is given (as well as $n$ and $k$), then $p$ can be
specified by its ordinal number in the list of $a$-neighbors.
This list contains at most $n^c$ elements, so the ordinal
number contains $O(\log n)$ bits.

To specify $p$ without knowing $a$, we give the ordinal number
of $p$ in $R$, which is $k+O(\log n)$ bits long. Here we again need $n$ and
$k$, but this is another $O(\log n)$ bits.

To reconstruct $a$ from $b$ and $p$, we  enumerate
all strings of lengths less than $n$ that have conditional
complexity (relative to $b$, which is known) less than~$k$, and
find $R$-counterparts for them using (OM) until $p$
appears. Then $a$ is the $L$-counterpart of $p$ in this
matching.

Formally speaking, for given $n$ and $k$ we should fix not only
a graph $G$ but also some on-line matching procedure, and use
the same procedure both for constructing $p$ and for
reconstructing $a$ from $b$ and $p$.\qed

\subsection{On-line Matchings Exist}
        \label{olm-exist}

It remains to prove the statement (OM). Our proof follows
the original Muchnik's argument adapted for the combinatorial
setting.

First, let us prove a weaker statement when on-line matchings
are replaced by off-line matchings. In this case the statement
can be reformulated using Hall's criterion, and we get the following
statement:

\textbf{Off-line version of (OM)}.
\emph{There exists a constant~$c$ such that for any integers $n$
and $k$, where $n>1$ and $k\le n$, there exists a bipartite graph $G$
whose left part $L$ is of size $2^n$, the right part $R$ is of
size $2^kn^c$, each vertex in $L$ has at most $n^c$ neighbors in
$R$ and for any subset $X\subset L$ of size $t\le 2^k$ the set
$N(X)$ of all neighbors of all elements of $X$ contains at least
$t$ elements.}

We prove this statement by probabilistic arguments. We choose at
random (uniformly and independently) $n^c$ neighbors for each
vertex $l\in L$. In this way we obtain a (random) graph where
all vertices in $L$ have degree at most $n^c$; the degree can be less,
as two independent choices for some vertex may coincide.

We claim that this random graph has the required property with
positive probability. If it does not, there exists a set
$X\subset L$ of some size $t\le 2^k$ and a set $Y$ of size less
than $t$ such that all neighbors of all elements of $X$ belong
to~$Y$. For fixed $X$ and $Y$ the probability of this event is
bounded by
        $
\left(\frac{1}{n^c}\right)^{t n^c}
        $
since we made $t n^c$ independent choices ($n^c$ times for each
of $t$ elements) and for each choice the probability to get into
$Y$ is at most $1/n^c$ (the set $Y$ covers at most $1/n^c$
fraction of points in $R$).

To bound the probability of violating the required property of
the graph, we multiply the bound above by the number of pairs
$X$, $Y$. The set $X$ can be chosen in at most $(2^n)^t$
different ways, since for each of $t$ elements we have at most
$2^n$ choices; actually the number is smaller since the order of
elements does not matter. For $Y$ we have at most $(2^k
n^c)^t$ choices. Further we sum up these bounds for all $t\le
2^k$. Therefore the total bound is
        $$
\sum_{t=1}^{2^k}\left(\frac{1}{n^c}\right)^{t n^c}
\left(2^{n}\right)^t
\left(2^{k}n^c\right)^t.
        $$
This is a geometric series; the sum is less than $1$ (which is
our goal) if the base is small. The base is
        $$
\left(\frac{1}{n^c}\right)^{n^c}
\left(2^{n}\right)
\left(2^{k}n^c\right)
=
\frac{2^{n+k}}{n^{c(n^c-1)}}
        $$
and $c=2$ makes it small enough for all $n>1$ and $k\le n$.
It even tends to zero as $n\to\infty$. Off-line version is proven.\qed

Now we have to prove (OM) in its original on-line version.
Fix a graph $G$ that satisfies the conditions
for the off-line version for given $n$ and $k$. Let us use the
same graph in the on-line setting with the following straightforward
``greedy'' strategy. When a new element $x\in L$ arrives, we
check if it has neighbors that are not used yet. If yes, one of
these neighbors is chosen to be a counterpart of $x$. If not,
$x$ is ``rejected''.

Before we explain what to do with the rejected elements, let us
prove that at most half of $2^k$ given elements could be
rejected. Assume that more than $2^{k-1}$ elements are rejected.
Then less than $2^{k-1}$ elements are served and therefore less
than $2^{k-1}$ elements of $R$ are used as counterparts. But all
neighbors of all rejected elements are used; this is the
only reason for rejection. So we get the contradiction with
the condition $\#N(X)\ge \#X$ if $X$ is the set of rejected
elements.

Now we need to deal with rejected elements. They are forwarded
to the ``next layer'' where the new task is to find on-line
matching for $2^{k-1}$ elements. If we can do this, then we
combine both graphs using the same $L$ and disjoint right parts
$R_1$ and $R_2$; the elements rejected at the first layer
are sent to the second one. In other terms: $(n,k)$ on-line
problem is reduced to $(n,k)$ off-line problem and $(n,k-1)$
on-line problem. The latter can then be reduced to $(n,k-1)$
off-line and $(n,k-2)$ on-line problems etc.

Finally we get $k$ levels. At each level we serve at least half
of the requests and forward the remaining ones to the next
layer. After $k$ levels of filtering only one request can be
left unserved, so one more layer is enough. Note also that we may
use copies of the same graph on all layers.

More precisely, we have proven the following statement:
\emph{Let $G$ be a bipartite graph with left side $L$ and 
right side $R$ that satisfies the
conditions of the off-line version for given $n$ and $k$. Replace
each element in $R$ by $(k+1)$ copies, all connected to the same
elements of $L$ as before. Then the new graph provides on-line
matchings up to size $2^k$.}

Note that this construction multiplies both the size of $R$ and the
degree of vertices in $L$ by $(k+1)$, which is a polynomial in $n$
factor. The statement (OM) is proven.\qed

\section{Muchnik's Theorem and
Extractors}\label{ext-muchnik-proof} In this section we present
another proof of Muchnik's theorem based on the notion of
extractors. This technique was first used in a similar situation
in~\cite{fortnow}. With this technique we prove some versions of
Muchnik's theorem for resource-bounded Kolmogorov complexity.
This result was presented in the Master Thesis of one of the
authors~\cite{musatov-diplom}.

\subsection{Extractors}

Let $G$ be a bipartite graph with $N$ vertices in the left part
and $M$ vertices in the right part. The graph may have multiple
edges. Let all vertices of the left part have the same degree
$D$. Let us fix an integer $K>0$ and a real number
$\eps>0$.

\begin{definition}
A bipartite graph $G$ is a $(K,\eps)$-extractor if for
all subsets $S$ of its left part such that $\#S\ge K$ and for
all subsets $Y$ of the right part the inequality
\begin{equation}\label{extractor}
\left|\frac{\#E(S,Y)}{D\cdot\#S}-\frac{\#Y}{M}\right|<\eps
\end{equation}
holds, where $E(S,Y)$ stands for the set of edges between $S$ and $Y$.
\end{definition}
In the sequel we always assume that $N$, $M$,
$D$, and sometimes other quantities denoted by uppercase letters are powers of $2$, and use corresponding lowercase letters ($n$, $m$, $d$, etc.) to denote their
logarithms. 
In this case the extractor may be seen as
a function that maps a pair of binary strings of length $n=\log
N$ (an index of a vertex on the left) and of length $d=\log D$ (an index
of an edge incident to this vertex) to a binary string of length
$m=\log M$ (an index of the corresponding vertex on the right).

The extractor property may be reformulated as follows: consider
a uniform distribution on a set $S$ of left-part vertices. The
probability of getting a vertex in $Y$ by taking a random
neighbor of a random vertex in $S$ is equal to
$\#E(S,Y)/(D\cdot\#S)$; this probability must be
$\eps$-close to $\#Y/M$, i.e. the probability of getting
a vertex in $Y$ by taking a random vertex in the right part.

It can be proven that for an extractor graph
a similar property holds not only for
uniform distributions on $S$, but for all distributions with
min-entropy at least $k=\log K$ (this means that no element of~$L$
appears with probability greater than $1/K$). That is, an
extractor extracts $m$ almost random bits from $n$ quasi-random
bits, with min-entropy $k$ or more, using $d$ truly random bits.
For a good extractor $m$
should be close to $k+d$ and $d$ should be small,
as well as~$\eps$. Standard
probabilistic argument shows that for all $n$, $k$ and
$\eps$ extractors with near-optimal parameters $m$ and
$d$ do exist:
\begin{theorem}\label{ext-probabilistic}
For all $K$, $N$, $M$ and $\eps$ such that $1<K\le N$, $M>0$, $\eps>0$, there exists an
$(K,\eps)$-extractor with
        $$
D=\left\lceil\max\left\{\frac
MK\cdot\frac{\ln2}{\eps^2},\
\frac1{\eps^2}\left(\ln\frac
NK+1\right)\right\}\right\rceil.
        $$
\end{theorem}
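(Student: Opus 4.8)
The plan is to prove Theorem~\ref{ext-probabilistic} by the probabilistic method: build the bipartite graph by choosing, for each of the $N$ left vertices, its $D$ neighbors independently and uniformly at random in the right part (allowing repetitions, hence the multigraph), and then show that with positive probability the extractor inequality~(\ref{extractor}) holds simultaneously for every pair $(S,Y)$ with $\#S\ge K$. By monotonicity it is enough to consider sets $S$ of size exactly $K$ (a larger $S$ is a union of size-$K$ pieces, and the averaged quantity $\#E(S,Y)/(D\#S)$ stays $\eps$-close to $\#Y/M$ if each piece is), and it is enough to consider all subsets $Y$ of the right part. So the goal reduces to a union bound over the $\binom{N}{K}\le (eN/K)^K$ choices of $S$ and the $2^M$ choices of $Y$.

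First I would fix $S$ with $\#S=K$ and $Y$ with $\#Y=L$. The quantity $\#E(S,Y)$ is a sum of $D\cdot K$ independent indicator random variables, each equal to $1$ with probability $p=L/M=\#Y/M$; its expectation is $p\cdot D\cdot K$. The event that (\ref{extractor}) fails for this pair is exactly the event that this sum deviates from its mean by more than $\eps D K$ (as an absolute fraction of $DK$), so by the Chernoff--Hoeffding bound it has probability at most $2\exp(-2\eps^2 D K)$. Next I would take the union bound: the total failure probability is at most
\[
(eN/K)^K\cdot 2^M\cdot 2\exp(-2\eps^2 D K).
\]
This is less than $1$ provided $2\eps^2 D K > K\ln(eN/K) + M\ln 2 + \ln 2$, i.e. provided $D$ exceeds roughly
\[
\frac{1}{2\eps^2}\Bigl(\ln\frac{eN}{K} + \frac{M}{K}\ln 2\Bigr),
\]
which is within a constant factor of the claimed bound $D=\lceil\max\{(M/K)(\ln2/\eps^2),\ (1/\eps^2)(\ln(N/K)+1)\}\rceil$; one then checks that the stated value of $D$, being the max of the two terms up to constants, indeed makes the product above smaller than $1$. (A cleaner route avoiding the $2^M$ factor for large $Y$: for a fixed $S$, note that $\#E(S,\cdot)/(D\#S)$ and $\#\{\cdot\}/M$ are both probability measures on the right part, and the supremum over $Y$ of their difference — total variation distance — is attained at a specific $Y$ determined by which right vertices are over- or under-represented; but the crude $2^M$ bound already suffices with the given $D$, so I would keep it.)

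The main obstacle is bookkeeping in the union bound rather than any real difficulty: one has to be careful that the $2^M$ factor (all subsets $Y$, not just one size) is absorbed by the $M/K$ term inside $D$, and that the reduction to $\#S=K$ is clean — in particular that averaging over a partition of $S$ into size-$K$ blocks really does preserve $\eps$-closeness, which it does because $\#E(S,Y)/(D\#S)$ is a convex combination of the per-block ratios. I would also note that $K>1$ is needed only so that $\ln(N/K)$ and the counting bounds behave, and $K\le N$ so that $S$ of size $K$ exists; the real-valued $D$ is rounded up to the nearest integer and to the nearest power of $2$ in the running convention, costing only constant factors already absorbed. I expect no step here to require more than the standard Chernoff estimate plus careful arithmetic.
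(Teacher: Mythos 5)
Your approach is correct and is the standard probabilistic argument; in fact the paper does not prove Theorem~\ref{ext-probabilistic} itself (it cites \cite{extractor-bounds}), but its proof of Theorem~\ref{prefix-extractor-theorem} is exactly this calculation generalized over prefix lengths, and your sketch is that proof restricted to $i=0$. The one place you wave your hands does matter if one wants the exact stated $D$ rather than a constant-factor version of it: using the two-sided Chernoff bound $2\exp(-2\eps^2 DK)$ together with the crude estimate $\binom{N}{K}\le(eN/K)^K$, the union bound demands $\eps^2 D>\tfrac12\bigl(\ln\tfrac{eN}{K}+\tfrac{M}{K}\ln 2+\tfrac{\ln 2}{K}\bigr)$, whereas the stated $D$, via $\max\{a,b\}\ge(a+b)/2$, only yields $\eps^2 D\ge\tfrac12\bigl(\ln\tfrac{eN}{K}+\tfrac{M}{K}\ln 2\bigr)$, and the ceiling alone does not cover the extra $\tfrac{\ln 2}{2K}$ when $\eps$ is small. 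The paper removes this slack by verifying (\ref{extractor}) only in the direction $\#E(S,Y)/(D\#S)<\#Y/M+\eps$ and deducing the other direction by applying it to the complement of $Y$; this kills the spurious factor of $2$, and the arithmetic then lands on the claimed $D$ (strictness comes from the ceiling and from $\binom{N}{K}<(eN/K)^K$). Alternatively the Stirling-sharpened estimate $\binom{N}{K}\le(eN/K)^K/\sqrt{2\pi K}$ would absorb the $2$. Your reduction to $\#S=K$ is fine as stated: the uniform distribution on a larger $S$ is an average of uniform distributions on its size-$K$ subsets, so $\#E(S,Y)/(D\#S)$ is a convex combination of the per-subset ratios; this is precisely the observation the paper makes, and it does not require $\#S$ to be a multiple of $K$.
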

So for given $n$ and $k$ we may choose the followings values of
parameters (in logarithmic scale):
        $$
d=\log(n-k)+2\log(1/\eps)+O(1) \quad \hbox{and} \quad
m=k+d-2\log(1/\eps)-O(1).
        $$

The proof may be found in~\cite{extractor-bounds}; it is also shown
there that these parameters are optimal up to an additive term
$O(\log(1/\eps))$.

So far no explicit constructions of optimal extractors have been
invented. By saying the extractor is \emph{explicit} we mean
that there exists a family of extractors for arbitrary
values of $n$ and $k$, other parameters are computable in time
$\poly(n)$, and the extractor itself as a function of two
arguments is computable in $\poly(n)$ time. All known explicit
constructions are not optimal in at least one parameter: they
either use too many truly random bits, or not fully extract
randomness (i.e., $m\ll k+d$), or  work not for all values of $k$.
In the sequel we use the following theorem proven
in~\cite{extractor-explicit}:
\begin{theorem}
        \label{th:extractor-explicit}
For all $k$, $n$ and $\eps$ such that  $1<k\le n$ and $\eps>1/\poly(n)$, there exists an
explicit $(2^k,\eps)$-extractor with $m=k+d$ and
$d=O((\log n\log\log n)^2)$.
\end{theorem}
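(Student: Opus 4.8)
Theorem~\ref{th:extractor-explicit} is a known result from the theory of pseudorandomness (due to~\cite{extractor-explicit}), so the plan is to recall the construction together with the ideas behind its analysis rather than to invent a new one. A useful first observation is that, because the output length here is the information-theoretic maximum $m=k+d$, the object is essentially a \emph{lossless expander} (equivalently, a lossless condenser): spelling out the ``subset'' form of the extractor property with $M=2^{k+d}$, one sees that it demands in particular that every left-hand set $S$ of size $2^k$ send edges to all but an $\varepsilon$-fraction of the $M$ right-hand vertices, i.e.\ that the $D\cdot\#S$ edges leaving $S$ reach almost that many distinct neighbours. So I would aim to construct such a near-lossless expander; the mixing conditions that the ``extractor'' wording imposes on larger sets are weaker and fall out along the way.

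The two workhorses are Trevisan's extractor with the reconstruction paradigm, and an iterated condensing step. For the first, encode the source string $x\in\{0,1\}^n$ by a binary error-correcting code with good list-decoding properties, obtaining $\bar x$ of length $\poly(n)$; fix a Nisan--Wigderson combinatorial design, i.e.\ subsets $S_1,\dots,S_m\subseteq\{1,\dots,d\}$ of size $\log|\bar x|$ with small pairwise intersections; and on a seed $y\in\{0,1\}^d$ output the $m$ bits of $\bar x$ read at the positions selected by $S_1(y),\dots,S_m(y)$. Correctness is the reconstruction paradigm: were the output $\varepsilon$-far from uniform on some source of min-entropy $k$, a hybrid argument would produce a next-bit predictor, and the small-intersection property of the design would let one hard-wire a short advice string so as to turn that predictor into a procedure producing $x$ from fewer than $k$ bits of information; then $x$ would lie in a set of size less than $2^k$, contradicting the min-entropy assumption. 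With a carefully chosen (``weak'') design this is already explicit with seed length $O(\log^2 n)$. One then reaches the full statement by iterating a basic lossless-condensing operation --- itself built from a list-decodable code and the same reconstruction idea --- some $O(\log\log n)$ times, each iteration reshaping the source closer to the required near-flat distribution on $2^{k+d}$ points while shedding only a negligible amount of entropy. Explicitness is inherited throughout: the code is encodable in time $\poly(n)$, the design is constructible in time $\poly(n)$, only $O(\log\log n)$ such gadgets are composed, and all numerical parameters are computable from $n$, $k$ and $\varepsilon$, so the final object is a $\poly(n)$-time computable function of its two arguments.

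The main obstacle --- and the reason the seed length comes out as $O((\log n\log\log n)^2)$ rather than the optimal $O(\log n+\log(1/\varepsilon))$ furnished by the probabilistic Theorem~\ref{ext-probabilistic} --- is controlling the growth of the seed through the iterated composition: the randomness consumed at each stage has to be supplied by (a function of) material available to the next stage, so a careless recursion roughly squares the seed length at every level. Keeping it poly-logarithmic requires both the refined weak-design accounting in the base construction and limiting the recursion to depth $O(\log\log n)$, and striking this balance between maximal output length, polynomially small error, and a short seed is exactly what the construction of~\cite{extractor-explicit} is engineered to achieve. For the applications to space-bounded Kolmogorov complexity developed later in the paper only the qualitative features are used --- explicitness, output length $m=k+d$, and $d\le\poly(\log n)$ --- so any explicit extractor or lossless condenser with these parameters would serve equally well.
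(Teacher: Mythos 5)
The paper does not prove this theorem at all: it is imported verbatim from Reingold, Shaltiel, and Wigderson~\cite{extractor-explicit}, and the text immediately downgrades the seed length to the weaker $O(\log^3 n)$ for later use. You correctly identify this, so the honest answer is that there is no ``paper's own proof'' to compare against; your sketch is an attempt to reconstruct the cited external result, which the paper does not undertake.

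As a sketch of RSW it is broadly in the right spirit but conflates two things that the construction carefully keeps separate. Trevisan's extractor with a weak design does achieve seed length $O(\log^2 n)$ as you say, but its output length falls well short of $k+d$ --- it necessarily loses a constant fraction of the min-entropy, essentially because the hybrid/reconstruction argument charges the advice string against $k$. Achieving the near-lossless $m=k+d$ is precisely what the ``repeated condensing'' of RSW is engineered for, and it is not obtained simply by reading more output bits out of the Trevisan construction. Two further points are elided: the number of condensing rounds is governed by $\log(n/k)$ rather than $\log\log n$, and the delicate part of RSW --- the reason the seed does not blow up multiplicatively at each round --- is a careful seed-recycling argument that you acknowledge as ``the main obstacle'' but do not address. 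None of this affects the paper (which only needs the black-box statement and the qualitative features explicitness, $m=k+d$, $d\le\poly(\log n)$), but if you wanted to turn the sketch into a proof you would have to open up RSW's condenser and its composition lemma rather than lean on Trevisan's extractor as stated.
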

For the sake of brevity we use shorter and slightly weaker bound
$O(\log^3 n)$ instead of
$O((\log n\log\log n)^2)$ in the sequel.

\subsection{The Proof of Muchnik's Theorem}
\label{muchnik-with-extractor-section}

Now we show how to prove Muchnik's theorem using the extractor
technique. Consider an extractor with some $N$, $K$, $D$, $M$
and $\eps$. Let $S$ be a subset of its left part such that
$\#S \le K$. We say that a right-part element is
\emph{bad for $S$} if it has more than $2DK/M$ neighbors in $S$, that is,
twice more than the expected value if neighbors in the right part are
chosen at random and $S$ has maximal possible size $K$.
We say that a left-part element is \emph{dangerous in $S$}
if all its neighbors are bad for $S$.

\begin{lemma}\label{fortnow-lemma}
The number of dangerous elements in $S$ is less than
$2\eps K$.
\end{lemma}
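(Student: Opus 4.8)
The plan is to invoke the extractor inequality~\eqref{extractor} \emph{twice}: first to show that the set $B$ of elements bad for $S$ is small (of size less than $\varepsilon M$, much better than the trivial $M/2$ bound), and then again to bound the number of left-part vertices all of whose neighbors fall into $B$. Throughout I may assume $\#S=K$ (and $\varepsilon\le 1/2$, the only interesting case): enlarging $S$ to size $K$ only enlarges $B$, hence also enlarges the set of dangerous elements, so proving the bound for the enlarged $S$ suffices. Note that $K\le N$, so this enlargement is always possible.

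First I would bound $\#B$. Counting the edges leaving $S$ gives a weak estimate: each element of $B$ receives more than $2DK/M$ edges from $S$, while $S$ sends out only $D\cdot\#S=DK$ edges in total, so $\#E(S,B)>\#B\cdot 2DK/M$ forces $\#B<M/2$. To sharpen this I apply~\eqref{extractor} to the pair $(S,B)$: on one hand $\#E(S,B)/(D\cdot\#S)\ge 2\#B/M$ by the edge count just made, and on the other hand this ratio is less than $\#B/M+\varepsilon$ by the extractor property; combining the two yields $\#B/M<\varepsilon$, i.e. $\#B<\varepsilon M$.

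Next, let $T$ denote the set of dangerous elements. Since every edge leaving a dangerous vertex ends in $B$, we have $\#E(T',B)=D\cdot\#T'$ for every $T'\subseteq T$. If $\#T\ge K$, choose $T'\subseteq T$ with $\#T'=K$; then~\eqref{extractor} gives $1=\#E(T',B)/(DK)<\#B/M+\varepsilon<2\varepsilon$, contradicting $\varepsilon\le 1/2$. Hence $\#T<K$, so I can enlarge $T$ to a set $\widehat T$ of size exactly $K$. Applying~\eqref{extractor} to $(\widehat T,B)$ and using $\#E(\widehat T,B)\ge\#E(T,B)=D\cdot\#T$ together with $\#B/M<\varepsilon$ gives $\#T/K\le\#E(\widehat T,B)/(DK)<\#B/M+\varepsilon<2\varepsilon$, that is, $\#T<2\varepsilon K$, which is exactly the claim.

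I do not anticipate a real obstacle; the only point that needs care is the order of the two applications of~\eqref{extractor}. A single application, using only the trivial bound $\#B<M/2$, would yield the useless estimate $\#T<(1/2+\varepsilon)K$; the improvement to $2\varepsilon K$ genuinely relies on first using the extractor to compress $B$ down to size $\varepsilon M$ and only then applying it to the dangerous elements. The remaining bookkeeping — reducing to $\#S=K$ and padding $T$ up to size $K$ so that~\eqref{extractor}, which requires left-part sets of size at least $K$, is applicable — is routine.
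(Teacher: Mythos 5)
Your proof is correct and rests on the same key idea as the paper's: apply the extractor inequality~\eqref{extractor} twice, first to shrink the bound on the set $B$ of bad right-part vertices from the trivial $M/2$ down to $\varepsilon M$, and then to bound the number of dangerous left-part vertices whose entire neighborhood lies in $B$. The paper's second application is slightly more direct — it applies~\eqref{extractor} to the pair $(S,B)$ itself and observes that a $2\varepsilon$-fraction of dangerous elements in $S$ would force a $\ge 2\varepsilon$ fraction of $S$'s edges into $B$, contradicting $\#E(S,B)/(DK)<\#B/M+\varepsilon<2\varepsilon$. You instead form a set $\widehat T$ of size $K$ containing the dangerous vertices and apply~\eqref{extractor} to $(\widehat T,B)$; because you allow $T$ to range over all dangerous vertices in the whole left part (not just those in $S$), you first rule out $\#T\ge K$ before the padding step, and you actually establish a marginally stronger conclusion than the lemma requires. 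Both routes are sound, and the bookkeeping (reducing to $\#S=K$, assuming $\varepsilon\le 1/2$) is handled correctly.
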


\begin{proof}
We reproduce a simple proof from~\cite{fortnow}.
Without loss of generality we may  assume that $S$ contains
exactly $K$ elements; indeed, the sets of bad and dangerous elements can
only increase when $S$ increases.

For any graph, the fraction of bad right-part
vertices is at most~$1/2$, because the degree of a bad vertex is at
least twice as large as the average degree. The extractor
property reduces this bound from $1/2$ to $\eps$. Indeed,
let $\delta$ be the fraction of bad elements in the right part.
Then the fraction of edges going to bad elements (among all
edges starting at $S$) is at least $2\delta$. Due to the
extractor property, the difference between these fractions
should be less than $\eps$. The inequality
$\delta<\eps$ follows.

Now we count dangerous elements in $S$. If their fraction in $S$
is $2\eps$ or more, then the fraction of edges going to the
bad elements (among all edges leaving $S$) is at least
$2\eps$. But the fraction of bad vertices is less
than $\eps$, and the difference between two fractions
should be less than $\eps$ due to the extractor
property.\qed
\end{proof}

Now we present a new proof of Muchnik's theorem. As we have seen
before, we may assume without loss of generality that the length
of $a$ is less than $n$. Moreover, as we have said, we may assume that conditional
complexity $\KS(a\mmid b)$ equals $k-1$ (otherwise we decrease $k$)
and that $k<n$ (otherwise the theorem is obvious, take $p=a$).

Consider an extractor with given $n$, $k$; let $d=O(\log n)$,
$m=k$ and $\eps=1/n^3$; such an extractor exists due to
Theorem~\ref{ext-probabilistic}. The choice of $\eps$
will become clear later. We choose an extractor whose
complexity is at most
$2\log n+O(1)$. It is possible, because only $n$ and $k$ are needed to describe
such an extractor: other parameters are functions of $n$ and
$k$, and we can search through all bipartite graphs with given
parameters in some natural order until the first extractor with
required parameters is found. This search requires a very long
time, so this extractor is not explicit.

Now assume that an extractor is fixed. We treat the left part of
the extractor as the set of all binary strings of length less
than $n$ (including $a$), and the right part as the set of all
binary strings of length $m=k$ (we will choose $p$ among them).
Consider the set $S_b$ of all strings in the left part such that
their complexity conditional on $b$ is less than $k$; note that $a$
belongs to this set.

We want to apply Lemma~\ref{fortnow-lemma} to the set $S_b$ and
prove that $a$ is not dangerous in $S_b$
by showing that otherwise $\KS(a\mmid b)$ would
be too small. So $a$ has a neighbor $p$ that is not
bad for $S_b$, and this $p$ has the required properties.

According to this plan, let us consider two cases.

\textbf{Case 1}. If $a$ is not dangerous in $S_b$,  then $a$ has a
neighbor $p$ that is not bad for $S_b$. Let us show that $p$
satisfies the claim of the theorem.

Complexity of $p$ is at most $k+O(1)$ because its length is $k$.

Conditional complexity $\KS(p\mmid a)$ is logarithmic because $p$ is
a neighbor of $a$ in the extractor and to specify $p$ we need a
description of the extractor ($2\log n+O(1)$ bits) and the
ordinal number of $p$ among the neighbors of $a$ ($d=\log
D=O(\log n)$ bits).

As $p$ is not bad for $S_b$, it has less than $2D$ neighbors in
$S_b$. If $b$ is known, the set $S_b$ can be
enumerated; knowing $p$, we select
neighbors of $p$ in this enumeration.
Thus, to describe $a$ given $p$ and $b$, we need only a
description of the extractor and the ordinal number of $a$ in
the enumeration of the neighbors of $p$ in $S_b$, i.e., $O(\log
n)$ bits in total.

\textbf{Case 2}. Assume that $a$ is dangerous in $S_b$. Since
the set $S_b$ can be enumerated given $b$, the sets of all bad vertices (for
$S_b$) and all dangerous elements in $S_b$ can also be enumerated.
Therefore, $a$ can be specified by the string $b$,
the extractor and the ordinal number of $a$ in the enumeration of all dangerous
elements in $S_b$.
This ordinal number consists of $k-3\log n+O(1)$ bits due to
the choice of $\eps$ (Lemma~\ref{fortnow-lemma}).
So, the full description of $a$
given $b$ consists of $k-\log n+O(\log\log n)$ bits; $O(\log\log n)$ additional bits are needed for separating
$n$, $k$ and the ordinal number.
This contradicts the
assumption that $\KS(a\mmid b) = k-1$. Thus, the second case is
impossible and Muchnik's theorem is proven.\qed

\subsection{Several Conditions and Prefix Extractors}

In~\cite{muchnik-codes} An.~Muchnik proved also the following
generalization of Theorem~\ref{main-theorem}:
\begin{theorem}\label{two-conditions-theorem}
Let $a$, $b$ and $c$ be binary strings, and let $n$, $k$ and $l$
be numbers such that $\KS(a)<n$, $\KS(a\mmid b)<k$ and $\KS(a\mmid c)<l$.
Then there exist binary strings $p$ and $q$ of length $k$ and
$l$ respectively such that one of them is a prefix of the other one and all the
conditional complexities $\KS(a\mmid p,b)$, $\KS(a\mmid q,c)$, $\KS(p\mmid a)$,
$\KS(q\mmid a)$ are of order $O(\log n)$.
\end{theorem}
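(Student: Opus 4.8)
The plan is to extend the single-condition extractor argument of Section~\ref{muchnik-with-extractor-section} to two conditions by using a single ``prefix extractor'' that simultaneously plays the role of the two extractors needed for $k$ and for $l$. Without loss of generality assume $k\le l$ (so we want $p$ of length $k$ to be a prefix of $q$ of length $l$), and, as before, that the length of $a$ is less than $n$, that $\KS(a|b)=k-1$ and $\KS(a|c)=l-1$, and that $l<n$. The key structural object is a graph whose right part consists of all binary strings of length $l$, thought of as a ``tree'' of depth $l$: each string $p$ of length $k$ corresponds to the set of its length-$l$ extensions, i.e.\ an interval of size $2^{l-k}$. We want a single assignment of neighbors to each left vertex so that, reading off the first $k$ bits, we get a $(2^k,\varepsilon)$-extractor into $\{0,1\}^k$, and reading off all $l$ bits we get a $(2^l,\varepsilon)$-extractor into $\{0,1\}^l$ --- and, crucially, the length-$k$ neighbor of a left vertex is the length-$k$ prefix of its length-$l$ neighbor under the same edge index.

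First I would establish, by a probabilistic argument essentially identical to the one behind Theorem~\ref{ext-probabilistic}, that such a prefix extractor exists with $d=O(\log n)$ and $\varepsilon=1/n^3$: pick $D$ random length-$l$ strings as neighbors of each left vertex; the length-$l$ projection is then a random graph of the usual kind, and the length-$k$ projection is the image of that same graph under the (many-to-one) prefix map, which only helps concentration. A union bound over the two extractor conditions and all the relevant $S,Y$ still gives a graph meeting both with positive probability. As before, this graph can be found by exhaustive search from $n,k,l$, so it has complexity $O(\log n)$. Then I would run the argument of Lemma~\ref{fortnow-lemma} twice. Let $S_b$ be the set of length-$<n$ strings of conditional complexity $<k$ given $b$ (so $\#S_b\le 2^k$ and $a\in S_b$), and $S_c$ the analogous set for $c$ (so $\#S_c\le 2^l$, $a\in S_c$). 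By Lemma~\ref{fortnow-lemma} applied to the length-$k$ extractor and $S_b$, the number of dangerous elements in $S_b$ is $<2\varepsilon 2^k$; if $a$ were dangerous in $S_b$ we could describe it by $b$, the graph, and its index among those dangerous elements, giving $\KS(a|b)\le k-\log n+O(\log\log n)$, a contradiction. Hence $a$ has an edge index $i$ such that its length-$k$ neighbor $p$ is not bad for $S_b$. Similarly, applying Lemma~\ref{fortnow-lemma} to the length-$l$ extractor and $S_c$, $a$ is not dangerous in $S_c$, so it has an edge index $j$ whose length-$l$ neighbor $q$ is not bad for $S_c$.

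The main obstacle is exactly the one extra requirement that distinguishes this theorem from two separate invocations of Theorem~\ref{main-theorem}: we need a \emph{single} edge index to work for both, so that $p$ (first $k$ bits) really is a prefix of $q$ (all $l$ bits). Running the two arguments independently gives indices $i$ and $j$ that need not coincide. The fix is to argue that ``most'' indices are good on each side and so some index is good on both: the set of indices $i$ for which the length-$k$ neighbor of $a$ is bad for $S_b$ is small (because $a$ is not dangerous in $S_b$ --- in fact one should strengthen the not-dangerous conclusion to ``at most, say, a $1/2$ fraction of $a$'s neighbors are bad'', which follows from the same counting with a slightly different threshold, or by taking $\varepsilon$ small enough and using that a bad-fraction among $a$'s own neighbors that is too large would make $a$ dangerous-like), and likewise on the $c$ side; since each bad set covers less than half of the $D$ edge indices, their union is not all of them, so a common good index exists. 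Once that common index is fixed, the four complexity bounds follow as in Case~1 of the single-condition proof: $|p|=k$ and $|q|=l$ bound $\KS(p),\KS(q)$; $\KS(p|a)$ and $\KS(q|a)$ are $O(\log n)$ via the graph description plus the edge index; and $\KS(a|p,b)$, $\KS(a|q,c)$ are $O(\log n)$ because $p$ (resp.\ $q$) has fewer than $2D$ neighbors in the enumerable set $S_b$ (resp.\ $S_c$), so $a$ is pinned down by its ordinal number among them. I would expect the bookkeeping of separating constants $n,k,l$ and edge indices to cost only the usual $O(\log\log n)$ and to be routine; the one genuinely new idea is the prefix-extractor construction together with the ``common good index'' counting.
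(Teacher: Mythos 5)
Your proposal is correct and follows essentially the same route as the paper: you introduce the prefix-extractor (the paper calls it exactly that), you recognize that "not dangerous'' (all neighbors bad) is too weak to combine across the two conditions and replace it with "at most half the neighbors bad'' — which is the paper's notion of "weakly dangerous'' and its Lemma~\ref{generalized-fortnow-lemma} — and you then take a union bound over edge indices to find a single index whose $l$-bit neighbor and its $k$-bit prefix are simultaneously not bad for $S_c$ and $S_b$. The only cosmetic differences are that the paper takes $k\ge l$ rather than $k\le l$, and its prefix-extractor definition requires every truncation of the output to be an extractor (not just the two lengths you need here), which is the natural form of the probabilistic existence theorem and immediately covers the generalization to $\poly(n)$ conditions that the paper mentions in passing.
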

This theorem is quite non-trivial: indeed, it says that
information about $a$ that is missing in $b$ and $c$ can be
represented by two strings such that one is a prefix of the
other, even if $b$ and $c$ are completely unrelated. It
implies also that for every three strings $a,b,c$ of length less
than $n$, the minimal length of a program that transforms $b$
to $a$ and at the same time transforms $c$ to $a$ is at most
$\max\{\KS(a\mmid b),\KS(a\mmid c)\}+O(\log n)$.

In fact a similar statement can be proven not only for two but
for many (even for $\poly(n)$) conditions. For the sake of
brevity we consider only the statement with two conditions.

This theorem also can be proven using extractors. Any extractor
can be viewed as a function
$E\colon\{0,1\}^n\times\{0,1\}^d\to\{0,1\}^m$.
\begin{definition}
We say that a
$(2^k,\eps)$-extractor $E\colon\{0,1\}^n\times\{0,1\}^d\to\{0,1\}^m$,
where $m\ge k$,
is a \emph{prefix extractor} if
for every $i\le k$ its prefix of length $m-i$, i.e., a
function $E_i\colon\{0,1\}^n\times\{0,1\}^d\to\{0,1\}^{m-i}$
obtained by truncating $i$ last bits, is a
$(2^{k-i},\eps)$-extractor.
 \end{definition}
By using probabilistic method
the following theorem can be proven:
\begin{theorem}\label{prefix-extractor-theorem}
For all $k$, $n$, and $\eps$ such that $1<k\le n$ and $\eps>0$, 
there exists a prefix
$(2^k,\eps)$-extractor with parameters $d=\log
n+2\log(1/\eps)+O(1)$ and
$m=k+d-2\log(1/\eps)-O(1)$.
\end{theorem}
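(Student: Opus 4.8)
The plan is to mimic the probabilistic argument used for the ordinary extractor (Theorem~\ref{ext-probabilistic}), but now building the graph bit by bit from the right end so that all $k+1$ truncations are controlled simultaneously. Concretely, I would fix $N=2^n$, the target length $m=k+d-2\log(1/\varepsilon)-O(1)$ with $d=\log n+2\log(1/\varepsilon)+O(1)$, and choose for each left vertex its $D=2^d$ outgoing edges independently and uniformly among the $M=2^m$ right vertices. I would then define, for each $i$ with $0\le i\le k$, the prefix graph $E_i$ on the same left part and on the $M/2^i=2^{m-i}$ blocks obtained by grouping right vertices according to their first $m-i$ bits; an edge of $E$ going to $y$ becomes in $E_i$ an edge to the block containing $y$. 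The key observation is that for a fixed left vertex the induced edges in $E_i$ are again i.i.d.\ uniform over the $2^{m-i}$ blocks, so each $E_i$ is, on its own, exactly a randomly chosen $D$-regular bipartite graph of the type analyzed in Theorem~\ref{ext-probabilistic}.

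The second step is a union bound over the $k+1$ levels. For each fixed $i$, the standard Chernoff-plus-union-bound estimate (over all $S$ with $\#S=\lceil2^{k-i}\rceil$ and all $Y$ in the right part of $E_i$) shows that the probability that $E_i$ fails to be a $(2^{k-i},\varepsilon)$-extractor is at most something like $2^{M/2^i}\cdot\binom{N}{2^{k-i}}\cdot 2\exp(-2\varepsilon^2 D 2^{k-i})$, and with the stated choice of $d$ this is, say, below $2^{-(k-i)-10}$ or at any rate summable. I would pick the constants in the $O(1)$ terms of $d$ and $m$ large enough that the sum over $i=0,\dots,k$ of these failure probabilities is strictly less than $1$. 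Then a graph exists for which every $E_i$ is simultaneously a $(2^{k-i},\varepsilon)$-extractor, which is precisely the definition of a prefix $(2^k,\varepsilon)$-extractor with the claimed parameters. (The slight loss of replacing $\log(n-k)$ by $\log n$ in $d$ compared to Theorem~\ref{ext-probabilistic} is exactly what buys us the extra factor $k+1\le n$ in the union bound.)

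I would then double-check the boundary bookkeeping: when $\#S$ is required to be at least $2^{k-i}$ but $2^{k-i}$ is not an integer power issue for small $i$ near $k$, and the trivial case $m-i\le 0$, which does not arise since $m\ge k\ge i$ under the chosen parameters; I would also note that monotonicity (a graph that extracts for all $S$ of size $\ge K$ automatically handles larger $S$) lets us restrict attention to the threshold size at each level, as in the proof of Lemma~\ref{fortnow-lemma}. Finally I would remark that the construction is non-explicit, just as in Theorem~\ref{ext-probabilistic}, so the resulting strings $p,q$ in the application will again be described via the $O(\log n)$-size specification of the graph.

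The main obstacle I anticipate is not any single estimate but getting the three interlocking quantities right at once: the degree $d$ must be large enough that each individual level's failure probability beats its union bound, small enough that $d=\log n+2\log(1/\varepsilon)+O(1)$ still holds, and the output length $m=k+d-2\log(1/\varepsilon)-O(1)$ must leave every truncated graph $E_i$ with enough ``room'' (i.e.\ $m-i$ large relative to $k-i$) for the Chernoff bound to apply with the same $\varepsilon$. Verifying that a single choice of the hidden constants makes all $k+1$ inequalities hold uniformly in $i$ — rather than level-by-level with drifting constants — is the delicate point, and the reason the factor $k+1$ (absorbed into $\log n$ since $k\le n$) appears in $d$ rather than spoiling $m$.
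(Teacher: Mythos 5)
Your proposal is correct and follows essentially the same route as the paper's own proof: pick the $D$ neighbors of each left vertex uniformly and independently, observe that each truncation $E_i$ is then itself a uniformly random $D$-out graph, and take a union bound over $i=0,\dots,k$ of the standard Chernoff-plus-counting estimate for each level, choosing the $O(1)$ constants in $d$ so that the total failure probability is below $1$. The paper additionally records two small simplifications (restricting to $\#S=K/2^i$ exactly and to one direction of the extractor inequality) before the Chernoff step, but these are exactly the bookkeeping reductions you anticipate, and your identification of the $i=k$ level as the one forcing $\log n$ in place of $\log(n-k)$ is the right diagnosis.
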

\textbf{Proof:}
This proof is quite similar to the standard proof of
Theorem~\ref{th:extractor-explicit}. In that proof the probabilistic argument is used to
show that a random graph has the required property with positive probability.
In fact it is shown that this probability is not only positive but close to $1$.
Then we note that the restriction of a random graph is
also a random graph, and the intersection of several events having
probability close to $1$ has a positive probability. Let us explain these arguments
in more detail.

We want to show that a random bipartite graph with
given parameters is a prefix extractor with a
positive probability.
First of all we note that it is enough to show that
inequality~(\ref{extractor}) holds for $S$ of size exactly $K$.
Then this condition is true also for every bigger set $S$, since the uniform distribution on $S$ is an average of the distributions on its subsets of size $K$. Second,
it is enough to check the bound (\ref{extractor}) only in
one direction:
       $$
  \frac{\#E(S,Y)}{D\cdot \#S} < \frac{\#Y}{M}   + \eps
       $$
for all sets $S$ of cardinality $K$  and for all $Y$.
Indeed, the inequality
 $$
  \frac{\#E(S,Y)}{D\cdot \#S} > \frac{\#Y}{M}   - \eps
 $$
follows from the previous one applied to the complement of $Y$:
if there are too few edges from $S$ to $Y$ then there are too
many edges from $S$ to the complement of $Y$.

Now we specify the distribution on graphs. For every
string of length $n$ (a vertex of the left part) we choose at
random (uniformly and independently) $D=2^d$ strings of length
$m$ (its neighbors in the right part). Now we bound the probability of
the event \emph{a random graph is not a prefix extractor}.

If the extractor property is violated for some prefix of length
$m-i$ then there exists a set $S$ of $K/2^{i}$ elements from the
left part and a set $Y\subset\{0,1\}^{m-i}$ of size $\alpha
2^{m-i}$ (for some $\alpha>0$) such that the number of edges
between $S$ and $Y$ is greater than $(\alpha+\eps)KD/2^{i}$. From
the Chernoff-Hoeffding bound it follows that probability of this event is
not greater than $\exp(-2\eps^2KD/2^{i})$. Hence,
probability of the event \emph{a random graph is not a prefix
extractor} can be limited by the sum of such bounds for all $i$,
$S$, and $Y$:
        $$
\sum\limits_{i=0}^k \left(\begin{array}{c}N\\K/2^{i}\end{array}\right)
  \cdot2^{M/2^{i}}\exp(-2\eps^2KD/2^{i}).
        $$
Since $\binom{u}{v} \le u^v/v! \le (ue/v)^v$, this sum does not exceed
        \begin{multline*}
\sum\limits_{i=0}^k \left(\frac{eN}{K/2^i}\right)^{K/2^i}
2^{M/2^i}\exp(-2\eps^2KD/2^i)=\\
=\sum\limits_{i=0}^k\left(e^{(K/2^i)(1+\ln(2^iN/K))}\cdot
e^{-\eps^2KD/2^i}\right)\cdot\left(e^{M\ln2/2^i}\cdot
e^{-\eps^2KD/2^i}\right).
        \end{multline*}
The condition of the theorem implies that $D\ge
\frac{M}{K}\cdot\frac{\ln2}{\eps^2}$, assuming that $O(1)$ constant is large enough. Hence, the second factor
in each term of the sum is not greater than $1$. On the other hand,
the first factor equals
        $$
e^{(K/2^i)(1+\ln(2^iN/K)-\eps^2D)}\le e^{(K/2^i)(1+\ln N-\eps^2D)},
        $$
which is less than $(1/2)^{(K/2^i)}$, since $D\eps^2\ge
1+\ln2+\ln N$. The sum of these terms is strictly less than
$1$. Thus, probability of the event \emph{a random graph is a
prefix extractor} must be positive. \qed

\smallskip

However, using prefix extractors is not enough; we need to
modify the argument, since now we need to find two related
neighbors in two graphs. So we modify the notion of a dangerous
vertex and use the following analog of
Lemma~\ref{fortnow-lemma}:
\begin{lemma}\label{generalized-fortnow-lemma}
Let us call a left-part element weakly dangerous in $S$ if at
least half of its neighbors are bad for $S$. Then the number of
weakly dangerous elements in $S$ is at most $4\eps K$.
\end{lemma}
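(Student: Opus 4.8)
The plan is to mimic the proof of Lemma~\ref{fortnow-lemma} almost verbatim, just tracking the new constants that come from the weaker definition of ``weakly dangerous.'' As before, I would first reduce to the case $\#S=K$, noting that enlarging $S$ only enlarges the set of bad right-part vertices and hence the set of weakly dangerous left-part vertices, so the bound for $\#S=K$ implies the bound for all larger $S$.

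Next I would bound the fraction $\delta$ of bad right-part vertices exactly as in Lemma~\ref{fortnow-lemma}: a bad vertex has degree (into $S$) at least twice the average $DK/M$, so the fraction of $S$-edges landing in bad vertices is at least $2\delta$; the extractor inequality~(\ref{extractor}) applied to $S$ and the set $Y$ of bad vertices gives $2\delta - \delta < \varepsilon$, i.e. $\delta < \varepsilon$. (Here I am using that $\#E(S,Y)/(D\#S) \ge 2\delta$ while $\#Y/M = \delta$.) This step is unchanged.

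Then comes the only new point: counting weakly dangerous vertices. Suppose the fraction of weakly dangerous vertices in $S$ is $\beta$. Each such vertex sends at least half of its $D$ edges into bad right-part vertices, so the number of $S$-edges landing in bad vertices is at least $\beta K D/2$, i.e. the fraction of $S$-edges into bad vertices is at least $\beta/2$. By the extractor property this fraction differs from $\#Y/M = \delta < \varepsilon$ by less than $\varepsilon$, hence $\beta/2 < \delta + \varepsilon < 2\varepsilon$, giving $\beta < 4\varepsilon$ and thus at most $4\varepsilon K$ weakly dangerous elements.

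I do not anticipate a genuine obstacle here; the argument is a routine variation. The one place to be slightly careful is the factor-of-two accounting: in Lemma~\ref{fortnow-lemma} a dangerous vertex contributes all $D$ of its edges to bad vertices (fraction $\ge 2\varepsilon$ forces a contradiction with the $\varepsilon$ gap), whereas now a weakly dangerous vertex contributes only $D/2$, so to keep the same contradiction one needs the fraction of weakly dangerous vertices to be $4\varepsilon$ rather than $2\varepsilon$ — this loss of a factor $2$ is exactly what produces the weaker $4\varepsilon K$ bound in the statement.
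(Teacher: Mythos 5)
Your proof is correct and follows exactly the route the paper intends: the paper's own proof of this lemma is the one-line remark ``since only half of all neighbors are bad, we need twice more elements,'' deferring to the proof of Lemma~\ref{fortnow-lemma}, and your write-up is precisely that argument carried out explicitly, with the $\delta<\varepsilon$ step unchanged and the edge-counting step adjusted from $\ge\beta$ to $\ge\beta/2$ to account for ``at least half'' rather than ``all'' neighbors bad.
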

\textbf{Proof:} is similar to the proof of Lemma~\ref{fortnow-lemma}.
Since only half of all neighbors are bad, we need twice more
elements.\qed

\smallskip

Now we give a new proof of
Theorem~\ref{two-conditions-theorem} based on prefix
extractors. Fix a prefix extractor $E$ with parameters $n$, $k$,
$d=O(\log n)$, $m=k$ and $\eps=1/n^3$. Again, we may
assume that complexity of this extractor is $2\log n+O(1)$. We
also may assume that $\KS(a\mmid b)=k-1$, $\KS(a\mmid c)=l-1$ and (without
loss of generality) $k\ge l$.

Let $S_b$ and $S_c$ be the sets of strings of conditional
complexity less than $k$ and $l$ conditional on $b$ and $c$
respectively. Call an element \emph{weakly dangerous in $S_b$}
if it is weakly dangerous (in $S_b$) for the original extractor
and \emph{weakly dangerous in $S_c$} if it is weakly dangerous
(in $S_c$) for the $l$-bit prefix of $E$. Since this prefix $E_{k-l}$
is also an extractor, the statement of
Lemma~\ref{generalized-fortnow-lemma} holds for $S_c$. The
string $a$ belongs to the intersection of $S_b$ and $S_c$ and is
not weakly dangerous in both. Hence, a random neighbor of $a$ and its
prefix are not bad for $S_b$ [resp. $S_c$] with probability
greater than $1/2$. So we can find a $k$-bit string $p$ such
that $p$ and its $l$-bit prefix $q$ are not bad for $S_b$ and
$S_c$ respectively.

They satisfy the requirements. Indeed, the conditional
complexities $\KS(p\mmid a)$ and $\KS(q\mmid a)$ are logarithmic because
$p$ and $q$ can be specified by their ordinal numbers among the
neighbors of $a$ in the extractor. The string $a$ may be
obtained from $p$ and $b$ with logarithmic advice because $p$ is not bad for $S_b$ in
$E$; similarly, $a$ can be obtained from $q$ and $c$ with logarithmic advice because $q$
is not bad for $S_c$ in $E_{k-l}$. This completes the proof of
Muchnik's theorem for two conditions.\qed

\subsection{Muchnik's Theorem about Space-Bounded Complexity}

The arguments from Sect.~\ref{muchnik-with-extractor-section}
together with constructions of explicit extractors imply some
versions of Muchnik's theorem for resource-bounded Kolmogorov
complexity. In this section we present such a theorem for the
space-bounded complexity.

First of all, the definitions. Let $\varphi$ be a multi-tape
Turing machine that transforms pairs of binary strings to binary
strings. Conditional complexity $\KS^{t,s}_{\varphi}(a\mmid b)$ is the
length of the shortest $x$ such that $\varphi(x,b)$ produces $a$
in at most $t$ steps using space at most~$s$.
It is known (see~\cite{li-vitanyi}) that there exists an
\emph{optimal description method} $\psi$ in the following sense:
for every $\varphi$ there exists a constant $c$ such that
        \begin{equation*}
\KS_{\psi}^{ct\log t,cs}(a\mmid b)\le \KS_{\varphi}^{t,s}(a\mmid b)+c.
        \end{equation*}
We fix such a method $\psi$, and in the sequel use notation
$\KS^{t,s}$ instead of $\KS_{\psi}^{t,s}$.

Now we present our variant of Muchnik's theorem for
space-bounded Kolmogorov complexity:
\begin{theorem}\label{muchnik-space}
Let $a$ and $b$ be binary strings and $n$, $k$ and $s$ be numbers
such that $\KS^{\infty,s}(a)<n$ and $\KS^{\infty,s}(a\mmid b)<k$. Then
there exists a binary string $p$ such that

\begin{itemize}
\item[\textbullet] $\KS^{\infty,O(s)+\poly(n)}(a\mmid p,b) = O(\log^3 n);$

\item[\textbullet] $\KS^{\infty,O(s)}(p)\le k+O(\log n);$

\item[\textbullet] $\KS^{\infty,\poly(n)}(p\mmid a)= O(\log^3 n),$

\end{itemize}
where all constants in $O$- and $\poly$-notation depend only on
the choice of the optimal description method.
\end{theorem}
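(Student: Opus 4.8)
The plan is to rerun the extractor argument of Section~\ref{muchnik-with-extractor-section} with three modifications. First, we can no longer locate a suitable extractor by brute force (that would cost too much space), so we take the \emph{explicit} extractor of Theorem~\ref{th:extractor-explicit}, whose parameters and values are computable in $\poly(n)$ time and hence $\poly(n)$ space; this is exactly why the seed length $O(\log n)$ becomes $d=O(\log^3 n)$ and why the two ``$O(\log n)$'' terms in the conclusion turn into ``$O(\log^3 n)$''. Second, Theorem~\ref{th:extractor-explicit} outputs $k+d$ bits while we need $|p|\le k+O(\log n)$; so we truncate. A prefix of a $(2^k,\eps)$-extractor is again a $(2^k,\eps)$-extractor: in the edge-counting form~(\ref{extractor}) a set $Y'$ of $k$-bit strings lifts to a set $Y$ of $(k+d)$-bit strings with $\#E(S,Y)=\#E'(S,Y')$ and $\#Y/M=\#Y'/M'$, and the truncation is still poly-time computable. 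From now on $E$, and later each $E_j$, denotes the extractor truncated to its first (appropriately many) output bits; for a truncated extractor $M=K$, so ``bad for $S$'' means ``more than $2D=2^{d+1}$ neighbours in $S$''. Third, and most important, ``Case 2'' of the original proof --- $a$ dangerous in $S_b$ --- no longer yields a contradiction; it only gives a shorter but more space-consuming description of $a$ from $b$, so instead of a contradiction we iterate. Throughout we assume, as in Sections~\ref{proof-muchnik} and~\ref{muchnik-with-extractor-section}, that $|a|<n$ and $k<n$.

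Put $S_b^{(0)}=\{x:|x|<n,\ \KS^{\infty,s}(x|b)<k\}$; then $a\in S_b^{(0)}$ and $\#S_b^{(0)}<2^k$, so we take the size bound $K_0=\max(4,2^k)$. Given $S_b^{(j)}\ni a$ with a power-of-two size bound $K_j\ge 4$, fix the truncated explicit $(K_j,1/n^3)$-extractor $E_j$ of Theorem~\ref{th:extractor-explicit}; note $E_j$ depends only on $n,k,j$, its (truncated) output has $\log K_j$ bits, and its seed length is $d_j=O(\log^3 n)$. If $a$ is not dangerous in $S_b^{(j)}$ with respect to $E_j$, stop. Otherwise Lemma~\ref{fortnow-lemma} (with $S=S_b^{(j)}$, legitimate since $\#S\le K_j$) gives that the set $S_b^{(j+1)}$ of dangerous elements of $S_b^{(j)}$ satisfies $\#S_b^{(j+1)}<2K_j/n^3$ and still contains $a$; let $K_{j+1}$ be the least power of two that is both $\ge$ this bound and $\ge 4$. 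A dangerous element can exist only when $2K_j/n^3>1$, i.e.\ $K_j>n^3/2$, and otherwise the bounds shrink by the factor $\approx 2/n^3$, which can happen for at most $O(k/\log n)$ steps; hence the iteration stops at some level $j^*=O(k/\log n)=O(n/\log n)$, with $a$ not dangerous in $S_b^{(j^*)}$. Let $e^*$ be an edge with $v:=E_{j^*}(a,e^*)$ not bad for $S_b^{(j^*)}$ (it exists since $a$ is not dangerous), and let $p$ be a self-delimiting code of $j^*$ (of length $O(\log n)$) followed by $v$ (of length $\log K_{j^*}\le k+O(1)$). Then $|p|\le k+O(\log n)$, so the trivial self-printing program gives $\KS^{\infty,O(s)}(p)\le k+O(\log n)$. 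To obtain $p$ from $a$ one needs only $n,k,j^*,e^*$ (in all $O(\log^3 n)$ bits) and one evaluation of the explicit $E_{j^*}$, in $\poly(n)$ space and without $b$; so $\KS^{\infty,\poly(n)}(p|a)=O(\log^3 n)$. To obtain $a$ from $p$ and $b$, decode $j^*$ and $v$, rebuild $S_b^{(0)},\dots,S_b^{(j^*)}$ using $b$, and select $a$ among the at most $2^{d_{j^*}+1}$ elements of $S_b^{(j^*)}$ that are neighbours of $v$ (this is where ``not bad'' is used); an index of $O(\log^3 n)$ bits pins down $a$, so $\KS^{\infty,O(s)+\poly(n)}(a|p,b)=O(\log^3 n)$.

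The step I expect to be the main obstacle is the last one: verifying that rebuilding $S_b^{(j^*)}$ really runs in space $O(s)+\poly(n)$ rather than exponential space. Two points are needed. (i) Membership ``$x\in S_b^{(0)}$'' is decidable in space $O(s)+\poly(n)$: a space-$s$ machine has only $2^{O(s)}$ reachable configurations, so we simulate $\psi(y,b)$ with a step counter that reports non-halting once this many steps elapse. (ii) When testing ``$x\in S_b^{(j)}$'' one must recurse into tests ``$x'\in S_b^{(j-1)}$'' while looping $x'$ over the \emph{ambient} cube $\{0,1\}^{<n}$, and must never nest \emph{enumerations} of the sets $S_b^{(i)}$; nesting would multiply the space by a constant at each of the $\Theta(n/\log n)$ levels, blowing it up to $2^{\Omega(n/\log n)}$. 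Organised this way, the computation is a tree of recursive calls of depth $O(j^*)=O(n)$ in which every frame uses $\poly(n)$ space except the level-$0$ frames, each of which uses $O(s)+\poly(n)$ and only one of which is ever active at a time; summing over the depth gives total space $O(s)+\poly(n)$. It then remains to check the routine points: that $\eps=1/n^3$ meets the requirement $\eps>1/\poly(n)$ of Theorem~\ref{th:extractor-explicit}, and that all the hidden constants depend only on the fixed optimal machine $\psi$.
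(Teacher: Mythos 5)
The strategy here matches the paper's: replace the probabilistic extractor with an explicit one of Theorem~\ref{th:extractor-explicit} (so $d$ and the advice lengths jump from $O(\log n)$ to $O(\log^3 n)$), truncate its output to $k$ bits, and, since the size-bound argument no longer yields a contradiction in Case~2, iterate the dangerous-set reduction across $O(k/\log n)$ layers with shrinking $K_j$, exactly as the paper does (mirroring Section~\ref{olm-exist}). Your analysis of $|p|$, of $\KS^{\infty,\poly(n)}(p|a)$, and of the depth-$j^*$ recursion for reconstructing $a$ with only one level-$0$ frame live at a time, is sound, and truncating an extractor is indeed harmless for the reason you give.

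However, there is a genuine gap that you push aside as a ``routine point.'' Your reconstruction of $a$ from $p,b$ bottoms out in the membership test ``$x\in S_b^{(0)}$,'' i.e.\ ``$\KS^{\infty,s}(x|b)<k$,'' and your own description of that test (simulate $\psi(y,b)$ within space $s$, with a step counter of $2^{O(s)}$) needs to know $s$. But $s$ is not part of $p$, not part of $b$, and cannot be hard-coded into the $O(\log^3 n)$-bit advice, because $s$ is unbounded and the theorem insists that the hidden constants depend only on the optimal machine. So as written, the program whose length you are bounding simply does not exist. The paper treats this as a real issue: it first proves the theorem in a weak form with $s$ as an extra condition, and then removes $s$ by running the enumeration for increasing $s'=1,2,3,\ldots$, exploiting (a) that for a fixed extractor at each layer the dangerous sets $S_b^{(j,s')}$ grow monotonically in $s'$, and (b) that one can enumerate them across all $s'$ without repetitions in space $O(s')+\poly(n)$, stopping once the pre-computed ordinal of $a$ is reached. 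This makes the enumeration self-terminating within space $O(s)+\poly(n)$ even though $s$ is never specified. Your membership-testing recursion would need the same device (or be recast as a monotone enumeration over $s'$), and that is not a formality; it is one of the two nontrivial ideas in the proof, on par with the layering, and you need to supply it.
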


\begin{proof}

The proof of this theorem starts as an effectivization of the
argument of Sect.~\ref{muchnik-with-extractor-section}. To find $p$
effectively, we use an explicit extractor with parameters
$n$, $k$, $d=O(\log^3 n)$,
$m=k$ and $\eps=1/n^3$. We increase  $d$ and
respectively the conditional complexity of $p$ when $a$ is given
from $O(\log n)$ to $O(\log^3 n)$, because
currently known explicit extractors use
more random bits than the ideal extractors from Theorem~\ref{ext-probabilistic}.\footnote{%
	\emph{Note added in proof}. Using Nisan--Wigderson construction of 
	pseudorandom bit generator
	one may improve this result and replace $\log^3$ by $\log$, 
	as in the original Muchnik's theorem. This argument will be published
	elsewhere.}
%The advantage is that to obtain $p$
%from $a$ we now need only polynomial space (in fact, even polynomial time).

First we prove a weaker version of the theorem
assuming that the value of $s$ is added
as a condition (in three complexities that are bounded by the theorem).
Later we explain how to get rid of this restriction.

Recall that a right-part element is bad if it has more than $DK/M$ neighbors on the left and a left-part element is dangerous if all its neighbors are bad. Let us show that if $a$ is not dangerous and $p$ is a neighbor of $a$ that is not bad,
then we can recover $a$ from $b$ and $p$ using
$O(\log^3 n)$ extra bits of information and $O(s)+\poly(n)$
space. For any string $a'$ we can test in
$O(s)+\poly(n)$ space whether $\KS^{\infty,s} (a'\mmid b)<k$:
We test sequentially all programs of length less than $k$
and check if they produce $a'$ on space $s$ given $b$.
Simulating every such a program, we limit its workspace to $s$,
and prevent infinite loops by counting the number of steps.
If a program makes more than $c^s$ steps in space $s$ then it
loops; here  $c$ is some constant that depends only on the choice of the
universal Turing machine. This counter uses only $O(s)$ space.
Therefore, given $b$ and $p$ we can enumerate all the strings $a'$
that are neighbors of $p$ and $\KS^{\infty,s} (a'\mmid b)<k$, and
wait until a string with a given ordinal number appears.

The difficulty arises when we try to prove that $a$ is not
dangerous. Let us try to repeat our arguments taking into account the space
restrictions. First we note that one can enumerate (or recognize: for space
complexity it is the same) all bad elements in the right part using space
$O(s)+\poly(n)$. As before, we assume here that $s$ is given in addition
to $n$, $k$, and $b$. Indeed,
bad elements (as defined above) have many neighbors among strings $a'$
such that $\KS^{\infty,s} (a'\mmid b)<k$, and those strings can be enumerated.

Therefore, we can also enumerate
all dangerous elements in the left part using space $O(s)+\poly(n)$.
We know also that the number of dangerous elements is small,
but this does not give us a contradiction (as it did before) since the space used
by this enumeration
increases from $s$ to $O(s)+\poly(n)$, and even a small increase
destroys the argument. So we cannot claim that $a$ is not dangerous
and need to deal somehow with dangerous elements.

To overcome this difficulty, we use the same argument as in
Sect.~\ref{olm-exist}.  We treat the dangerous elements at the next layer,
with reduced $k$ and other extractor graph. We need
$O(k)$ layers (in fact even $O(k/\log n)$ layers) since by
Lemma~\ref{fortnow-lemma}
at every next layer the number of dangerous elements that still need to
be served is reduced at least by the factor $2\eps$.
Note also that the
space overhead needed to keep the accounting information is
$\poly(n)$ and we never need to run in parallel several
computations that require space $s$; this space is needed only at the
bottom level of the recursion, in all other cases $\poly(n)$ is enough.

So we get the theorem in its weak form (with condition $s$). For the
full statement some changes are needed. Let us sequentially use space bounds
$s'=1,2,\ldots$: to enumerate all strings $a'$ such that
$\KS^{\infty,s} (a'\mmid b)<k$, we sequentially enumerate all strings
that can be obtained from $b$ and a $k$-bit encoding using space
$s'=1,2$, etc. The corresponding set increases as $s'$
increases, and at some point we enumerate all strings $a'$ such
that $\KS^{\infty,s} (a'\mmid b)<k$, though this moment is not known
to us. Note that we can avoid multiple copies of the same
string for different values of $s'$:
performing the enumeration for $s'$, we check for every
string whether it has appeared earlier, using $s'-1$ instead of
$s'$. This requires a lot of time, but only $O(s)$ space.
Knowing the ordinal number of $a$ in the entire enumeration, we
stop as soon as it is achieved; hence, the enumeration process
requires only space $O(s)+\poly(n)$, though $s$ is not specified
explicitly.

Similarly, the set of dangerous strings $a$ (that go to the second or
higher layer) increases as $s'$ increases, and can be enumerated
sequentially for $s'=1,2,3\ldots$ without repetitions in
$O(s')+\poly(n)$ space. Therefore, at every layer we can use the
same argument, enumerating all the elements that reach this
layer and at the same time are neighbors of $p$, until we produce
as many of them as required.\qed
\end{proof}

\textbf{Remarks}.
1.~The process of enumerating $a'$
such that $\KS^{\infty,s'}(a'\mmid b)<k$
sequentially for $s'=1,2,3,\ldots$ can be considered as the enumeration of
all $a'$ such that $\KS(a'\mmid b)<k$. So we just get the proof
for the unrestricted version of Muchnik's theorem with an additional
remark: if an explicit extractor is used, then the short programs provided
by this theorem require  only slightly more space than the programs given
in the condition.

2.~When we use several layers (instead of a contradiction with the
assumption that the complexity $\KS(a\mmid b)$ is exactly $k-1$) we in fact do 
not need  $\eps$ to be as small as $1/n^3$; it is enough 
to use a small constant value of $\eps$.

\subsection{Muchnik's Theorem for $\CAM$-complexity}

The arguments from the previous sections cannot be applied for
Kolmogorov complexity with polynomial \emph{time} bound. Roughly
speaking, the obstacle is the fact that we cannot implement an
exhaustive search over the list of `bad' strings in polynomial
time unless $\Pclass=\NP$. The best result that we can prove for
poly-time bounded complexity involves a version of Kolmogorov
complexity introduced in~\cite{BLvM}:
\begin{definition}\label{am-complexity}
Let $U_n$ be a non-deterministic universal Turing machine.
Arthur-Merlin complexity $\CAM^{t}(x\mmid y)$ is the length of a
shortest string $p$ such that
       \begin{enumerate}
 \item ${\rm Prob}_r[U_n(y,p,r) \mbox{ can print } x 
  \mbox{ and cannot print any other string }
 ]>2/3$

 \item $U_n(y,p,r)$ stops in  time at most $t$
 \textup(for all branches of non-deterministic computation\textup).
       \end{enumerate}
\end{definition}
As always, $\CAM^{t}(x):=\CAM^{t}(x\mmid \lambda)$.

This definition is typically used for $t=\poly(|x|)$.
Intuitively, a $\CAM$-des\-crip\-tion $p$ of a string $x$ given
another string $y$ is an interactive Arthur--Merlin protocol:
Arthur himself can do probabilistic polynomial computations, and
can ask questions to all-powerful but not trustworthy Merlin;
Merlin can do any computations and provide to Arthur any
requested certificate. So, Arthur should ask such questions that
the certificates returned by Merlin could be effectively used to
generate $x$.
With this version of resource-bounded Kolmogorov complexity we
have a variant of Muchnik's theorem:
\begin{theorem}\label{amd-theorem}
For every polynomial $t_1$, there exists a polynomial $t_2$ such
that the following condition holds. Let  $a,b$ be strings such that 
$\KS^{t_1(n),\infty}(a\mmid b)< k$, where $n=|a|$. 
Then there exists a string $p$ of length $k+O(\log^3 n)$ such that
 \begin{itemize}
 \item $\KS^{t_2(n),\infty}(p\mmid a)=O(\log^3 n)$ and
 \item $\CAM^{t_2(n)}(a\mmid b,p)=O(\log^3 n)$.
 \end{itemize}
\end{theorem}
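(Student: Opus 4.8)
The plan is to combine the extractor argument of Section~\ref{muchnik-with-extractor-section} with the layered construction of Section~\ref{olm-exist}, but to replace the \emph{enumeration} of the set of low-complexity strings (which is the only step that uses unbounded time in the space-bounded proof) by an Arthur--Merlin \emph{verification}. Fix an explicit extractor $E\colon\{0,1\}^n\times\{0,1\}^d\to\{0,1\}^k$ with $d=O(\log^3 n)$ and $\varepsilon=1/n^3$ as in Theorem~\ref{th:extractor-explicit}. Treat the left part as all strings of length $\le n$ (so $a$ is there) and the right part as candidate descriptions $p$ of length $k$. Let $S_b$ be the set of strings $a'$ with $\KS^{t_1(n),\infty}(a'|b)\le k$; note $\#S_b\le 2^{k+1}$ and $a\in S_b$. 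As before, call $p$ \emph{bad for $S_b$} if it has more than $2D\cdot 2^{k+1}/2^k$ neighbours in $S_b$, and call $a'$ \emph{dangerous} if all its neighbours are bad. Lemma~\ref{fortnow-lemma} gives $<2\varepsilon\cdot 2^{k+1}$ dangerous elements, so iterating over $O(k/\log n)$ layers (with a fresh extractor graph at each layer, exactly as in Section~\ref{olm-exist}) removes all dangerous elements; thus $a$ is served at some layer $j$, and we let $p$ be a neighbour of $a$ at that layer that is not bad. Since $p$ is one of $D$ neighbours of $a$ in an explicit extractor, and the layer index $j$ and the graph are computable in $\poly(n)$ time, $\KS^{t_2(n),\infty}(p|a)=O(\log^3 n)$ for a suitable polynomial $t_2$. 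The length bound $|p|=k+O(\log^3 n)$ comes from the $k$-bit index in the right part plus $O(\log n)$ bits for $n,k$ and the layer index, absorbed into $O(\log^3 n)$.

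The heart of the matter is the first bullet: reconstructing $a$ from $b$ and $p$ by an $\CAM^{t_2(n)}$ protocol. Arthur is given $b,p$, the (explicit) description of all extractor graphs, the layer index $j$, and the ordinal number $i$ of $a$ among the elements of $S_b$ that survive to layer $j$ and are neighbours of $p$; all of this is $O(\log^3 n)$ bits. Arthur asks Merlin to supply a full list of the $<2D=\poly(n)$ candidate neighbours $a'_1,\dots,a'_r$ of $p$ that belong to $S_b$ and survive to layer $j$, \emph{each accompanied by a $\KS^{t_1(n),\infty}(\cdot|b)$-witness} (a program of length $\le k$ producing it from $b$ within time $t_1(n)$). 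Arthur checks in $\poly(n)$ time that each supplied $a'_m$ is indeed a neighbour of $p$ in the $j$-th graph and that its witness program runs correctly in time $t_1(n)$; he then outputs the $i$-th string on the list. The point is that survival to layer $j$ is determined by the greedy on-line matching procedure run on $S_b$, which is itself a deterministic polynomial-time function of the portion of $S_b$ relevant to the graphs involved --- and Arthur can recompute it once Merlin has certified the membership of every relevant string.

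The main obstacle, and the reason the $\CAM$ class rather than plain nondeterminism appears, is \emph{completeness of the list}: Merlin could omit some neighbour $a'_m$ of $p$ lying in $S_b$, shifting the ordinal $i$ and causing Arthur to output the wrong string. A dishonest short list cannot be detected by checking witnesses alone. This is exactly where the probabilistic lower-bound machinery of Arthur--Merlin protocols enters: one has Arthur demand a certificate that the claimed list has the \emph{correct size} --- i.e. that the number of neighbours of $p$ inside $S_b$ (restricted to those surviving to layer $j$) equals $r$ --- using a standard set-lower-bound (hashing / Goldwasser--Sipser-style) protocol, which certifies $|T|\ge r$ with high probability for a set $T$ recognizable in $\NP$ (membership $a'\in S_b$ is witnessed by the program, neighbourhood and survival are in $\Pclass$). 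Combined with the trivial upper bound $|T|\le r$ that Arthur checks directly from the list, this pins down $r$ exactly, and the $>2/3$ success probability of the size protocol is precisely what the ``all accepting paths return $x$'' clause of Definition~\ref{am-complexity} tolerates. The remaining details --- choosing $t_2$ large enough to dominate $t_1$, the graph computations, and the polynomially many repetitions of the size protocol needed to boost its confidence, and verifying that the layered accounting adds only $O(\log^3 n)$ to every complexity --- are routine and follow the templates already used in Theorem~\ref{muchnik-space}.
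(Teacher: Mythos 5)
Your proposal departs fundamentally from the paper's proof, and the departure hides a genuine gap. The paper does \emph{not} try to port the bad/dangerous-element argument from the space-bounded proof to the time-bounded setting (indeed it flags this as exactly the obstacle: ``we cannot implement an exhaustive search over the list of `bad' strings in polynomial time unless $\Pclass=\NP$''). Instead it switches to a completely different reconstruction technique tied to one \emph{particular} extractor: Trevisan's function, viewed as a Nisan--Wigderson generator over a list-decodable encoding $\hat a$ of $a$. The test set $B$ (the $\TR$-image of $L_b$) is a next-bit distinguisher, the hybrid argument fixes an index $i$ and advice $p$ (truth tables of $\hat u_1,\ldots,\hat u_{i-1}$), and an XOR-lemma-type predictor turns Merlin's membership certificates for $B$ into an approximation of $\hat a$, which is then list-decoded and disambiguated by a Sipser $\CD$-code $p''$. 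None of this appears in your proposal.

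The concrete gap in your argument is the predicate ``$a'$ survives to layer~$j$''. You state that Arthur can ``recompute'' it once Merlin certifies the membership of the listed neighbours of $p$ in $S_b$, but that is false: whether $a'$ is dangerous at layer~$j'$ depends on whether \emph{all} $D$ of its layer-$j'$ neighbours $q$ are bad, and whether $q$ is bad depends on how many elements of the \emph{whole} set $S^{(j')}\subseteq S_b$ (not just neighbours of $p$) hit $q$; $S^{(j')}$ in turn is defined by the same recursion one layer down. Unwinding this, an honest NP-style certificate for ``dangerous at layers $1,\ldots,j-1$'' has size on the order of $(D\cdot T)^{\,j-1}\cdot\poly(n)$, i.e.\ quasi-polynomial already for $j=\Theta(\log n)$ layers and exponential for the $\Theta(k/\log n)$ layers you would actually need; so the membership predicate of your set $T$ is not efficiently witnessable, which is precisely the hypothesis both your ``Merlin lists $T$ and Arthur checks witnesses'' step and the Goldwasser--Sipser set-size protocol need. (The set-size protocol is also not doing what you want: it only certifies $|T|\ge r$ for an NP set, and if you had NP-verifiable membership plus the true $|T|$ as $O(\log n)$ advice you would not need it at all --- the issue is entirely the witnessability, which you have not addressed.) Dropping to a single layer does not help either: if $a$ happens to be dangerous, the dangerous set has size about $2\varepsilon K$, which is still exponential, and you have no poly-time way to locate $a$ inside it. This is exactly the dead end the paper identifies, and the reason its proof abandons the bad/dangerous framework in favour of the Trevisan/NW reconstruction argument.
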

\textbf{Proof:}
In the proof of this theorem we cannot use an arbitrary
effective extractor. We employ very essentially  properties
of one particular extractor constructed by
L.~Trevisan~\cite{trevisan}. Our arguments mostly repeat the
proof of Theorem~3 from~\cite{BLvM}.

First of all we remind the definition of the Trevisan extractor,
which is based on the technique from the seminal paper by Nisan and
Wigderson~\cite{NW97}. The first crucial ingredient of the Trevisan
function is a weak design. A system of sets
        $$S_1,\ldots,S_m\subset\{1,\ldots,d\}$$
is called a weak design with parameters $(l,d)$ if each $S_i$
consists of $l$ elements and for every $i>1$ the sum
$\sum\limits_{j=1}^{i-1}2^{\#(S_i\cap S_j)}$ is bounded by $(m-1)$.
Weak designs exist; moreover, they can be constructed
effectively. More precisely, there exists an algorithm
that for any given $l,m$ generates a week design with
$d=O(l^2\log m)$ in time polynomial in $l$ and $m$, %=O(\log^3 m)$,
see~\cite{NW97}.

Let us fix a weak design as above. For $x\in\{0,1\}^d$ we use
the following notation: $x|_{S_i}$ denotes the $l$-bit string
that is obtained by projecting $x$ onto coordinates specified by
$S_i$.

The second important ingredient of Trevisan's construction is an
error correcting code. For every positive integer $n$ and
$\delta>0$, there exists a list decodable code
         $$
  \LDC_{n,\delta}:\ \{0,1\}^n\to \{0,1\}^{\bar n}
         $$
where $\bar n = \poly(n/\delta)$, such that
 \begin{enumerate}
 \item $\LDC_{n,\delta}(x)$ can be computed in polynomial time;
 \item given any $y\in\{0,1\}^{\bar n}$, the list of all
 $x\in\{0,1\}^n$ such that $\LDC_{n,\delta}(x)$ and $y$
 agree in at least $(1/2+\delta)$ fraction of bits, can be
 generated in time $\poly(n/\delta)$. In particular, this property means
 that the number of words $x$ in  this list is not greater than $\poly(n/\delta)$;
 \end{enumerate}
(see, e.g.,~\cite{Su97}). In the sequel we will
assume that $\bar n$ is a power of $2$.

Let us fix an encoding as above and denote $l(n)=\log \bar n$.
For $u\in\{0,1\}^n$ the value $\LDC_{n,\delta}(u)$ is a string of
length $2^l$. So, we can view $\LDC_{n,\delta}(u)$ as a Boolean
function
      $$\hat u:\{0,1\}^l\to \{0,1\}$$
Having fixed a weak design $S_1,\ldots,S_m$ and an
encoding $\LDC_{n,\delta}$, we define the Trevisan function
$\TR_{\delta}: \{0,1\}^n\times\{0,1\}^d \to \{0,1\}^m$ as
        $$
 \TR_{\delta}(u,y) = \hat u (y|_{S_1})\ldots \hat u (y|_{S_m}).
        $$
We do not need to show that $\TR$ is an extractor (for suitable
values of $n, d, m$); in our proof we refer directly to
the definition of this function. We will use the Trevisan function
for $\delta=\frac{1}{8m}$ and
$m=k+d+1$. 
More precisely, the parameters are chosen as follows. Numbers $k$
and $n$ are taken from the statement of the theorem; 
$l(n)=\log \bar n$ is obtained from the construction of $\LDC_{n,\delta}$;
further, we can choose appropriate $m$ and $d=O(l^2\log m)=O(\log^3 n)$ 
so that (i) there exists a weak design with parameters $m,l,d$, and
(ii) it holds $m=k+d+1$.

Denote by $L_b$ the set of all strings whose time-bounded
complexity conditional on $b$ is less than $k$:
       $$
  L_b = \{ u\in\{0,1\}^n\ |\ C^{t_1(n),\infty}(u\mmid b)< k \}.
       $$
Then $a\in L_b$ and $\#L_b< 2^{k}$.
We have chosen such an $m$ that the $\TR$-image of $L_b\times\{0,1\}^d$ 
covers at most $50\%$ of the set $\{0,1\}^m$.
Denote by $B$ the predicate \emph{being in the $\TR$-image of
$L_b\times\{0,1\}^d$}. For every $u\in L_b$
       $$
 \prob_{r_1\ldots r_d}[B(\TR_{\delta}(u,r_1\ldots r_d))=1]-
  \prob_{r_1\ldots r_m}[B(r_1\ldots r_m)=1]\ge 1/2
       $$
since the first probability is equal to $1$ and the second one
is not greater than $1/2$. In other notation, we have
 \begin{eqnarray*}
 \prob_{y\in\{0,1\}^d}[B(\hat u(y|_{S_1}) \hat u(y|_{S_2})
       \ldots \hat u(y|_{S_m}) )=1] - \ \ \ {}\\
    -\prob_{r_1\ldots r_m}[B(r_1\ldots r_m)=1]\ge 1/2.
 \end{eqnarray*}
We apply the standard `hybridization' trick: we note that for
some $i$,
 \begin{equation}       
 \begin{split}
 \prob_{y,r_{i+1},\ldots,r_m}[B(\hat u(y|_{S_1}) \hat
     u(y|_{S_2}) \ldots \hat u(y|_{S_i})r_{i+1}\ldots r_m)=1]
     -\hskip55pt\\
    - 
   \prob_{y,r_i,r_{i+1},\ldots,r_m}
     [B(\hat u(y|_{S_1}) \hat u(y|_{S_2}) \ldots \hat u(y|_{S_{i-1}}) r_i \ldots r_m    )=1]
    \ge 1/(2m).
 \end{split}   \label{hybr-ineq}
 \end{equation}
Further, we can somehow fix the bits of $y$ outside $S_i$
so that~(\ref{hybr-ineq}) remains true. Denote $y|_{S_{i}}$
by $x$. Now each function $\hat u(y|_{S_j})$ depends on $\#(S_j\cap
S_i)$ bits from $x$ (the other bits of $y$ are fixed).  We denote this function
by $f_j$. The truth  table  of 
$$f_j\ :\ \{0,1\}^{\#(S_j\cap S_i)} \to \{0,1\}$$
consists of  $2^{\#(S_j\cap S_i)}$ bits. 
With a slight abuse of notations we will write $f_j(x)$ (though 
$f_j$  depends on only $\#(S_j\cap S_i)$ bits of string $x$).
Note that the definition of $f_j$ involves
implicitly the string $u$ and those bits in $y$ 
that we fixed outside positions $S_i$.

To specify the truth tables of all functions $f_1,\ldots, f_{i-1}$ 
we need
         \begin{equation}
  \sum\limits_{j=1}^{i-1}2^{\#(S_j\cap S_i)}<m\label{length-p}
         \end{equation}
bits (the last inequality follows from the definition of weak
designs). 

The explained construction of functions $f_1,\ldots, f_{i-1}$ works for every $u\in \{0,1\}^n$.
We denote by $p$ the concatenation of  the truth tables of  $f_1, \ldots, f_{i-1}$
for $u=a$, where $a$ is the string from the statement of the theorem.
By~(\ref{length-p}) the length of $p$ is less than $m$.

To specify this $p$ given $a$, we need to know only $m$, $i$ and the bits of $y$
fixed outside  $S_i$. Hence, $\KS^{\poly(n),\infty}(p\mmid a)=O(\log^3 n)$.

In the rest of the proof we show that there exists an
Arthur--Merlin protocol that reconstructs $a$ given $b$, $p$ and
some small additional information. Since $u=a$, it is enough
to reconstruct  string $\hat u$ (then we apply the decoding
procedure and find $a=\LDC^{-1}_{n,\delta}(\hat u)$).

Let us investigate  inequality (\ref{hybr-ineq}). To make the
notations more concise, we denote
         $$
 g_{r_i}(x,r_{i+1} \ldots r_m)= \left\{
   \begin{array}{cl}
    r_i& \mbox{ if }B(f_1(x) \ldots f_{i-1}(x)r_i\ldots r_m)=1\\
    1-r_i&\mbox{ otherwise}.
   \end{array}
   \right.
         $$
% Note that the truth tables of all $f_1,\ldots, f_{i-1}$ are contained in $p$.        
By a standard argument from the computational
XOR Lemma~\cite{Gol95} we get
        \begin{equation}\label{approx-ineq}
\prob_{x\in\{0,1\}^l,r_i \ldots r_m\in\{0,1\}^{m-i+1}}[\hat u(x) =
g_{r_i}(x,r_{i+1} \ldots r_m)]\ge 1/2+1/(2m).
        \end{equation}
\begin{comment}
Indeed,
        %
\begin{multline*}
        %
\pr_{x,r_i\ldots r_m}[\hat u(x)=g_{r_i}(x,r_{i+1}\ldots r_m)]=\\
=\pr_{x,r_i\ldots r_m}[\hat u(x)=g_{x_i}(x,r_{i+1}\ldots r_m)|r_i=\hat u(x)]
     \pr_{x,r_i\ldots r_m}[r_i=\hat u(x)]+\\
+ \pr_{x,r_i\ldots r_m}[\hat u(x)=g_{x_i}(x,r_{i+1}\ldots r_m)|r_i\not=\hat u(x)]
     \pr_{x,r_i\ldots r_m}[r_i\not=\hat u(x)] = \\
 = \frac12\pr_{x,x_i\ldots r_m}[B(F(x,r_i\ldots r_m))=1|r_i=\hat u(x)]+\\
 +\frac12\pr_{x,x_i\ldots r_m}[B(F(x,r_i\ldots r_m))=0|r_i\not=\hat u(x)]=\\
 =\frac12 + \frac12(\pr_{x,r_i\ldots r_m}
      [B(F(x,r_i\ldots r_m))=1|r_i=\hat u(x)]-\\
 -\pr_{x,r_i\ldots r_m}[B(F(x,r_i\ldots r_m))=1|r_i\not=\hat u(x)])\\
 =\frac12 + \frac12(\pr_{x, r_{i+1}\ldots r_m}
      [B(F(x,\hat u(x),r_{i+1}\ldots r_m))=1]-\\
  -\pr_{x,r_{i+1}\ldots r_m}[B(F(x,(1-\hat u(x)),r_{i+1}\ldots r_m))=1])=\\
  =\frac12 + \pr_{x, r_ir_{i+1}\ldots r_m}
      [B(F(x,\hat u(x),r_{i+1}\ldots r_m))=1]-\\
  -\pr_{x,r_ir_{i+1}\ldots r_m}[B(F(x,r_i,r_{i+1}\ldots r_m))=1]\ge
  \frac12+\frac1{2m}
         %
\end{multline*}
\end{comment}
Now we fix a value of $r_i$ (set it to $0$ or $1$) so that
inequality~(\ref{approx-ineq}) remains true. This bit must be included into
the description of $a$ given $b$ and $p$. Without any loss of generality we assume
that $r_i=1$, and in the sequel we omit $r_i$ in our notations.
In other words, instead of $g_{r_i}(x,r_{i+1} \ldots r_m)$ we write
         $$
 g(x,r_{i+1} \ldots r_m)= \left\{
   \begin{array}{cl}
    1& \mbox{ if }B(f_1(x) \ldots f_{i-1}(x)1 r_{i+1}\ldots r_m)=1\\
    0&\mbox{ otherwise}.
   \end{array}
   \right.
         $$

If the word $p$ defined above and a ``typical'' sequence
$r_{i+1}\ldots r_{m}$ are given,
Arthur can approximate $\hat u$ and then reconstruct $a$ (using
decoding algorithm for $\LDC_{n,\delta}$). So, Arthur chooses at random 
several copies of $r_{i+1}\ldots r_{m}$ and tries to approximate
$\hat u$ with each copy.  Further we explain how it works.

First, we need some notation.
We say a string $v'\in\{0,1\}^{\bar n}$ is an $\alpha$-ap\-pro\-xi\-ma\-tion 
to a string $v\in\{0,1\}^{\bar n}$ if  these strings coincides  in at least
$\alpha \bar n$ bits. In particular, we will be  interested in
$\alpha$-approximation to $\hat u$.

If we fix in $g(x,r)$ the second argument $r$, we get some Boolean 
function $g^{(r)}(x)$ that depends on $x\in\{0,1\}^l$.
For every fixed $r$ we identify the corresponding function $g^{(r)}(x)$
with its truth table, i.e.,
with the string $z^{(r)}$ of length $\bar n = 2^{l}$ where every $x$-th bit
equals $1$ iff $g(x,r)=1$. So, the number of $1$'s in $z^{(r)}$ is
equal to the number of strings $x$ such that 
$B(f_1(x)\cdots  f_{i-1}(x)1r)=1$.

We say that a string $v\in\{0,1\}^{\bar n}$ is a \emph{candidate} if 
$v$ is a codeword of $\LDC_{n,\delta}$, and
for at least
$1/32m$ of all $r\in \{0,1\}^{m-i}$ the corresponding string $z^{(r)}$ is an
$(1/2+1/8m)$-approximation to $v$.
From the decoding
property of the code $\LDC_{n,\delta}$, each $z\in\{0,1\}^{\bar n}$ can be an
$(1/2+1/8m)$-approximation for at most $q=\poly(m)$ different codewords
$\LDC_{n,\delta}(u)$.
Hence, there exist at most
$32mq$ candidates (of course, $\hat u$ is a candidate). By
Sipser's $\CD$-coding theorem~\cite{Si83}
there exists a poly-time program
$p'$ of length $2\log (32mq) = O(\log n)$ that accepts $\hat u$
and rejects all other candidates (no warranty about
non-candidates: $p'$  may accept or reject any of them).

\textbf{First part of the Arthur--Merlin protocol}:
Denote
         $$
  \bar g = \sum\limits_{x,r}g(x,r)/2^{m-i}.
         $$
This is the average number of strings $x\in\{0,1\}^{l}$ such that $g(x,r)=1$
for a random $r\in\{0,1\}^{m-i}$.

At first Arthur chooses $s$ random strings $r(1),\ldots,r(s)$ of
length $(m-i)$ (a polynomial $s=s(n)$ is specified below).
He asks Merlin to generate $s\cdot (\bar g-\gamma)$
($\gamma=\gamma(n)$ is also specified below) 
certificates for the facts that different tuples 
$\langle x,r(j)\rangle$ satisfy
$g(x,r(j))=1$, and verifies these certificates. 

Indeed, if $B(w)=1$ for some string $w$, Merlin can provide a
certificate for this fact: he communicates to Arthur 
(i)~some $u,y$ such that $\TR_{\delta}(u,y)=w$, and (ii)~provides a
poly-time program $\pi$ of length less than $k$ such that $\pi(b)$
stops in $t_1$ steps and returns $u$; that is, Merlin proves to
Arthur that $u\in L_b$.

If
at least one certificate is false, Arthur stops without any
answer. If the certificates are OK, Arthur calculates $\tilde z_{1},
\ldots, \tilde z_{s}$, where $x$-th bit of $\tilde z_{j}$ is $1$ iff Merlin
provided a certificate of the fact that $g(x,r(j))=1$.

We need the following probabilistic lemma:
 \begin{lemma}\label{lemma1}
For some rational $\gamma=\bar n/\poly(m)$ and integer $s=\poly(n)$,
for randomly chosen strings $r(1),\ldots,r(s)$ of length $m-i$,
with probability more than $2/3$
Merlin can provide some certificates for at least $s\cdot (\bar g-\gamma)$
strings $r(j)$ \textup(each certificate must prove for one of  $r(j)$
that $g(x,r(j))=1$\textup), and,
whatever  certificates are chosen by Merlin, the
following two conditions hold:
 \begin{itemize}
\item
At least $(s/16m)$ of $s$
strings $\tilde z_{r(1)},\ldots,\tilde z_{r(s)}$ \textup(corresponding to the  certificates
given by Merlin\textup)
are $(1/2+1/8m)$-approximations to $\hat u$.
\item
For every codeword $v$ of $\LDC_{n,\delta}$, if at least $s/16m$ of  $s$ strings
$\tilde z_{r(1)},\ldots,\tilde z_{r(s)}$  are
$(1/2+1/8m)$-appro\-xima\-tions to $v$, then $v$ is a candidate.
\end{itemize}
\end{lemma}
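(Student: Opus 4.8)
The plan is to read Lemma~\ref{lemma1} as a concentration statement and prove it by combining a handful of Chernoff/Hoeffding bounds over Arthur's random draw $r(1),\dots,r(s)$ with one averaging argument that upgrades a bound on a \emph{sum} of agreements into a bound on the number of \emph{individual} good strings $z'_{r(j)}$. I fix $\gamma=\bar n/\poly(m)$ and $s=\poly(n)$ to be pinned down by the inequalities below, write $\mathrm{wt}(z)$ for the number of $1$'s in a string $z$ and $\mathrm{agr}(z,v)$ for the number of coordinates on which two $\bar n$-bit strings $z,v$ coincide (so ``$z$ is an $\alpha$-approximation of $v$'' means $\mathrm{agr}(z,v)\ge\alpha\bar n$). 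The one structural fact used throughout is that Merlin may certify only \emph{true} facts $g(x,1r(j))=1$, so every $z'_{r(j)}$ is obtained from $z_{r(j)}$ by zeroing out some of its $1$'s; since Merlin delivers exactly $s(\bar g-\gamma)$ certificates while at most $s(\bar g+\gamma)$ true facts are available (by concentration), the \emph{total} number of $1$'s zeroed over all $j$ is at most $2s\gamma$.

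First I would settle clause~(a) and the first bullet. By definition $\bar g$ is the average of $\mathrm{wt}(z_r)$ over $r\in\{0,1\}^{m-i}$ and $\mathrm{wt}(z_r)\in[0,\bar n]$, so Hoeffding gives $\sum_j\mathrm{wt}(z_{r(j)})\in\bigl[s(\bar g-\gamma),\,s(\bar g+\gamma)\bigr]$ except with probability $\exp(-\Omega(s\gamma^2/\bar n^2))$: the lower bound is clause~(a) (enough true facts to certify), and the upper bound gives the ``$\le 2s\gamma$ zeroed $1$'s'' budget. Zeroing a single $1$ changes the agreement of the affected $z'_{r(j)}$ with $\hat u$ by at most $1$, hence
$$\sum_j \mathrm{agr}(z'_{r(j)},\hat u)\ \ge\ \sum_j \mathrm{agr}(z_{r(j)},\hat u)\ -\ 2s\gamma .$$
Inequality~(\ref{approx-ineq}) says that the average of $\mathrm{agr}(z_r,\hat u)$ over $r$ is at least $\bar n(\tfrac12+\tfrac1{2m})$, so a second Hoeffding bound gives $\sum_j\mathrm{agr}(z_{r(j)},\hat u)\ge s\bar n(\tfrac12+\tfrac1{2m})-o(s\bar n/m)$ with high probability. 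Since each $\mathrm{agr}(z'_{r(j)},\hat u)\le\bar n$, a Markov-type averaging argument then forces at least $s/(16m)$ of the $z'_{r(j)}$ to be \emph{individually} $(\tfrac12+\tfrac1{8m})$-approximations of $\hat u$, provided $\gamma<c\bar n/m$ for a suitable constant $c$ so that the losses $2s\gamma$ and $o(s\bar n/m)$ do not swallow the $\Theta(1/m)$ margin of~(\ref{approx-ineq}).

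For the second bullet I would argue by contraposition, handling all $v$ at once. Because at most $2s\gamma$ ones are zeroed in total, fewer than $s/(32m)$ of the indices $j$ have $z'_{r(j)}$ differing from $z_{r(j)}$ in more than $T:=\bar n/(16m)$ positions --- this is exactly where $\gamma=\bar n/\poly(m)$ is used, to make $2s\gamma/T<s/(32m)$. Now suppose $\ge s/(16m)$ of the $z'_{r(j)}$ are $(\tfrac12+\tfrac1{8m})$-approximations of some fixed $v$; discarding the $<s/(32m)$ ``far'' indices, for $\ge s/(32m)$ indices $j$ we have $\mathrm{agr}(z_{r(j)},v)\ge(\tfrac12+\tfrac1{8m})\bar n-T$, i.e.\ $z_{r(j)}$ is a $(\tfrac12+\tfrac1{8m}-T/\bar n)$-approximation of $v$. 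A Chernoff bound, union-bounded over all $2^{\bar n}$ strings $v$ --- affordable since $\bar n=\poly(n)$ and $s$ will be a large enough polynomial --- then shows that with high probability this can happen only if $v$ is a ``candidate'', i.e.\ a fraction $\Omega(1/m)$ of \emph{all} $r\in\{0,1\}^{m-i}$ give a $\Theta(1/m)$-approximation of $v$; the only subtlety is to match the $\Theta(1/m)$ constants in the hypothesis, in the candidate definition, and in $T/\bar n$ so that they have the required mutual slack (and the number of candidates stays $\le \poly(m)\cdot q$ by list-decodability of $\LDC_{n,\delta}$, which is all the rest of the proof needs). A union bound over the $O(1)$ failure events --- the two weight deviations, the $\hat u$-agreement deviation, and the event that some $v$ violates the candidate conclusion --- each tuned to probability below $1/12$, yields the claimed $2/3$.

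The main obstacle is bookkeeping rather than a new idea: one must choose $\gamma=\bar n/\poly(m)$, $T=\Theta(\bar n/m)$, the various $\Theta(1/m)$ approximation thresholds, and $s=\poly(n)$ so that \emph{simultaneously} (i)~Merlin's certificate deficit $2s\gamma$ is negligible against the $\Theta(1/m)$ margin of~(\ref{approx-ineq}); (ii)~the perturbation $T/\bar n$ fits inside the gap between the approximation thresholds used for $v$; and (iii)~$s$ is large enough that all Chernoff deviations are $o(1/m)$ and the union bound over all $2^{\bar n}$ strings $v$ still leaves total failure probability below $1/3$, which forces $s$ to be a fairly large polynomial. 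One also has to dispose of the degenerate case in which $\bar g<\gamma$, so that no certificates are requested; here one invokes that $\hat u$ is a codeword of a list-decodable code, so the all-zero string cannot be a good approximation of more than $\poly(m)$ codewords and the argument does not collapse. Apart from these calibrations, the remaining ingredients --- the hybrid/XOR-lemma setup and the list-decoding of $\LDC_{n,\delta}$ --- are already in place above.
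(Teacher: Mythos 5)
The paper does not prove Lemma~\ref{lemma1} itself---it simply cites Claims~17 and~18 of~\cite{BLvM}---so your proposal is being judged as a genuine reconstruction rather than against a written-out argument.

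Your plan has the right shape: Hoeffding on $\sum_j\mathrm{wt}(z_{r(j)})$ to get the $2s\gamma$ ``withheld ones'' budget, Hoeffding plus an averaging step for the first bullet, and Chernoff union-bounded over $v$ for the second bullet; the degenerate $\bar g<\gamma$ remark is also sensible. The first bullet and the budget argument go through. But there is a real gap in the second bullet, and it is not just calibration. Your $T$-discarding step turns ``$z'_{r(j)}$ is a $(1/2+1/8m)$-approximation of $v$'' into ``$z_{r(j)}$ is a $(1/2+1/8m-T/\bar n)$-approximation of $v$,'' i.e.\ it \emph{weakens} the threshold. But the paper's notion of ``candidate'' is pinned to the threshold $1/2+1/8m$, and that is exactly the radius at which list-decodability of $\LDC_{n,\delta}$ with $\delta=1/(8m)$ is used to bound the number of candidates by $32mq$. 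Chernoff applied to your weakened event only yields that a $\ge 1/32m-o(1/m)$ fraction of all $r$ give $(1/2+1/8m-T/\bar n)$-approximations of $v$; it does \emph{not} yield a $1/32m$ fraction of $(1/2+1/8m)$-approximations, which is what ``$v$ is a candidate'' requires. Since the perturbation moves the threshold downward, no positive $T$ repairs this, and relaxing the candidate threshold would break the $32mq$ count unless $\delta$ (and with it $\bar n$, $l$, $d$, $m$) is also changed---slack the statement does not give you. Your write-up flags this as ``match the $\Theta(1/m)$ constants \dots\ so that they have the required mutual slack,'' but this is precisely the step that does not go through as written.

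A route that avoids the shift entirely: whenever $z'_{r(j)}$ is a $(1/2+1/8m)$-approximation of $v$ while $z_{r(j)}$ is not, at least one zeroed $1$ of $z_{r(j)}$ sits at a coordinate where $v$ is $0$, so each such index $j$ consumes at least one unit of the $2s\gamma$ budget; hence their number is at most $2s\gamma$, with no threshold change. Combining this with a Chernoff bound at the \emph{unperturbed} threshold $1/2+1/8m$ (for $z$, not $z'$) and choosing $\gamma$ small enough closes the second bullet. Either way, the part you treat as bookkeeping is in fact the crux of the claim.
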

\textbf{Proof:} see Claims~17 and~18 in~\cite{BLvM}.

In our Arthur--Merlin protocol we use the parameters
$s$ and $\gamma$ from Lemma~\ref{lemma1}.

\textbf{Second part of the Arthur--Merlin protocol}.
Arthur does not need anymore to communicate with Merlin. Now he
composes the list of all codewords $v$ that are
$(1/2+1/8m)$-approximated by 
at least $s/16m$ of strings $\tilde z_{1},\ldots,\tilde z_{s}$.
From Lemma~\ref{lemma1} it follows that
with probability more than $2/3$ all strings in this list are candidates,
and the string $\hat u$ is included
in the list. The program $p'$
defined above can distinguish
$\hat u$ from other strings from the list.

Thus, Arthur can find $\hat u$ in polynomial time if he is given
$b, p$ and the following additional information: the index $i$,
the bit $r_i$, the mean value $\bar g$, 
and the distinguishing program $p'$. In fact, it is enough
to know not the exact value of $\bar g$ but only
an approximation to this number; this approximation must be
precise enough so that Arthur can find the integer part of $s\bar g$.
Thus, the required additional information contains only $O(\log n)$ bits.

Now we check that the described protocol of
generating $a$ satisfies the definition of $\CAM$-complexity.
The $\CAM$-program for $a$
consists of  (i) the truth tables of functions $f_1,\ldots, f_{i-1}$ constructed from
$\hat u(y|_{S_1})$, \ldots, $\hat u(y|_{S_{i-1}})$ for $u=a$
(this is the longest part of the program; we denoted it by $p$), 
(ii) the bit $r_i$ chosen
so that (\ref{approx-ineq}) is true, (iii) a rational $\gamma$ and an approximation to 
a rational $\bar g$, and (iv) Sipser's code $p'$
that distinguishes $\hat u$ between all ``candidates''.
The Arthur--Merlin protocol works as follows. Arthur chooses
at random strings $r(1),\ldots,r(s)$. Merlin provides $s\cdot (\bar g-\gamma)$
certificates corresponding to these $r(j)$. Arthur 
computes  $\tilde z_{1}, \ldots, \tilde z_{s}$ corresponding to the obtained
certificates and finds the list of all $\LDC_{n,\delta}$-codewords $v$ that are $(1/2+1/8m)$-approximated
by at least $s/16m$ of these $\tilde z_j$.
Then Arthur selects $\hat u$ from this list of strings using distinguishing program
$p'$, and computes $a=\LDC^{-1}_{n,\delta}(\hat u)$.

If Merlin is fair, this plan works OK with probability more than $2/3$
(Lemma~\ref{lemma1}).
If Merlin wants to cheat, he has two options: provide a list of certificates
such that the required string $\hat u$ is not approximated by $s/16m$
of  $\tilde z_{1}, \ldots, \tilde z_{s}$, 
or such that at least $s/16m$ of $\tilde z_j$ approximate some \emph{non-candidate} 
codeword $v$ (in these cases Arthur fails to select $\hat u$ using $p'$). 
However from Lemma~\ref{lemma1} it 
follows that for random $r(1), \ldots, r(s)$ both these ways of cheating are 
impossible with probability more than $2/3$.~\qed

\begin{center}
        \textbf{Acknowledgments}
\end{center}
        \nopagebreak
This article is based on several discussions and  reports presented at
the Kolmogorov seminar (Moscow). Preliminary versions appeared
as \cite{shen-dagstuhl} and \cite{musatov-diplom}. The authors
are grateful to all participants of the seminar for many useful
comments. The authors thank also anonymous referees for very detailed
comments and helpful suggestions leading to a significant 
revision of this paper.

\end{document}